%% file: main.tex
\documentclass[runningheads]{llncs}
\usepackage[T1]{fontenc}
\usepackage{graphicx}

\input{preamble.tex}
\begin{document}
\title{A Formal Semantics of C with OpenMP Parallelism}
%
%




\author{Ke Du\inst{1}\orcidID{0009-0008-2465-1082} \and
Anshu Sharma\inst{2}\orcidID{0000-0002-8686-5835} \and
Liyi Li\inst{3}\orcidID{0000-0001-8184-0244} \and
William Mansky\inst{1}\orcidID{0000-0002-5351-895X}
}
\authorrunning{K. Du et al.}
%
\institute{University of Illinois Chicago\\
 \and
The College of William and Mary
\and
Iowa State University
}
\maketitle              
\begin{abstract}
  OpenMP is a popular parallelization framework that lets users transform sequential code into parallel code with a few simple annotations. Unfortunately, it is also easy to inadvertently introduce errors by adding OpenMP pragmas into otherwise correct programs, including both logic errors and race conditions. We present a formal semantics for C code with OpenMP directives, building on the C semantics of the CompCert verified compiler and its extension to concurrency. Our semantics captures subtle interactions between OpenMP directives and variable state that have been obscured by previous OpenMP semantics, and provides a basis for detecting undesired behaviors introduced by incorrect annotations: in particular, any successful execution is guaranteed to be free of data races.

\keywords{OpenMP  \and Formal Semantics \and Concurrency.}
\end{abstract}
\input{intros.tex}
\input{background.tex}
\input{formal.tex}
\input{comparison.tex}
\input{conclusion.tex}

\newpage\appendix
\section{ClightOMP Syntax}
\input{syntax.tex}

\section{Nested Parallel Region}
\input{nested_par_region.tex}

\section{Team Tree Operations}
\input{ttree_op.tex}

\vspace{-0.2cm}
\section{Supporting Privatization and Reduction Operations}
\input{priv_red_supporting_functions.tex}

\section{An Example ClightOMP Execution}
\input{example_execution.tex}

\section{Data Race Freedom}
\input{drf.tex}

\newpage

\bibliographystyle{splncs04}
\bibliography{sources, paper}

\end{document}

%% file: preamble.tex
\usepackage{filecontents}
\usepackage{amsmath}
\usepackage{amssymb}
\usepackage{ebproof}
\usepackage{todonotes}
\usepackage{algorithmic}
\usepackage{textcomp}
\usepackage{mathpartir}
\usepackage{color}
\usepackage{xcolor}
\usepackage{circuitikz}
\DeclareMathAlphabet{\mathpzc}{OT1}{pzc}{m}{it}

\usepackage{subcaption} 
\usepackage{epsfig}
\usepackage{listings}
\usepackage[override]{cmtt}
\usepackage{graphicx}
\usepackage{balance}
\usepackage{xspace}
\usepackage{booktabs}
\usepackage{url}
\usepackage{xparse}
\usepackage{enumitem}
\usepackage{wasysym}
\usepackage{tikz}
\usetikzlibrary{chains,fit,shapes,positioning,calc,arrows.meta,shapes.arrows}

\usepackage{stackengine,scalerel}
\usepackage{calc}

\usepackage{float}
\PassOptionsToPackage{hyphens}{url}\usepackage{hyperref}
\usepackage{cleveref}
\usepackage{mathtools}

\usepackage{appendix}

\colorlet{kwd}{black!80!green}
\definecolor{spec1}{RGB}{78, 131, 162}
\definecolor{spec0}{RGB}{66, 102, 136}
\definecolor{lespec}{RGB}{30, 80, 180}
\colorlet{spec}{lespec}
\colorlet{auto}{lespec!35!lightgray}

\lstdefinestyle{customc}{
  belowcaptionskip=1\baselineskip,
  breaklines=true,
  breakatwhitespace, 
  numbers=left,
  xleftmargin=\parindent,
  language=C,
  columns=flexible,      
  showstringspaces=false,
  basicstyle=\footnotesize\ttfamily,
  otherkeywords={uint,uintptr_t,let,assert},
  literate={{<-}{{$\leftarrow\,$}}2
            {->}{{$\rightarrow\,$}}2
            {<=}{{$\leq\,$}}2},
  numberstyle=\tiny\rmfamily,
  keywordstyle=\bfseries\color{green!40!black},
  commentstyle=\itshape\color{purple!40!black},
  identifierstyle=\color{blue!80!black},
}
\newenvironment{DIFnomarkup}{}{}
\newcommand{\myparagraph}[1]{\textbf{#1}.\xspace}

\newcommand{\code}[1]{\lstinline[keywordstyle=\color{green!40!black}, basicstyle=\color{green!40!black}, identifierstyle=\color{green!40!black}]|#1|}
\newcommand{\ignore}[1]{{}}

\newif\ifsubmit\submitfalse
\submitfalse
\ifsubmit
\newcommand{\kedu}[1]{}
\newcommand{\liyi}[1]{}
\newcommand{\wm}[1]{}
\newcommand{\anshu}[1]{}
\newcommand{\review}[1]{}

\else
\newcommand{\kedu}[1]{\textcolor{red}{Ke Du: #1}}
\newcommand{\liyi}[1]{\textbf{\textcolor{orange}{Liyi: #1}}}
\newcommand{\wm}[1]{\textbf{\textcolor{teal}{William: #1}}}
\newcommand{\review}[1]{\textbf{\textcolor{blue}{Review: #1}}}
\newcommand{\anshu}[1]{\textbf{\textcolor{red}{Anshu: #1}}}
\fi

\newcommand{\lang}{ClightOMP\xspace}

\newtheorem{defi}{Definition}

\newcommand{\kw}[1]{\ensuremath{\mathtt{#1}}}

\newcommand{\ttree}{\ensuremath{\mathcal{T}}}

\renewcommand{\vec}[1]{\ensuremath{\overrightarrow{\mkern-7mu #1 \mkern-5mu}}}
\newcommand{\vect}{\ensuremath{\vec{t}}}
\newcommand{\IN}{\ensuremath{\in\mkern-6mu}}

\newcommand{\fnt}[1]{\ensuremath{\operatorname{\code{#1}}}}

\newcommand{\cdt}[1]{\texttt{#1}}

\newcommand{\tp}{\ensuremath{\mathcal{P}}}
\newcommand{\tuple}[1]{\langle #1 \rangle}
\newcommand{\Tuple}[1]{\big\langle #1 \big\rangle}
\newcommand{\TUPLE}[1]{\Big\langle #1 \Big\rangle}
\newcommand{\Bracket}[1]{\Big[ #1 \Big]}
\newcommand{\op}{\fnt{op}}
\newcommand{\conc}{\mathbin{||}}
\newcommand{\XMAPSTO}[1]{\xmapsto[#1]{\phantom{a}}}
\newcommand{\lenv}{\mathit{le}}

\newcommand{\upd}{\ensuremath{::=}}

\newcommand{\Team}{\ensuremath{\mathit{Team}}}
\newcommand{\TeamTIDS}{\ensuremath{\fnt{team_tids}}}

\newcommand{\letin}[2]{\mathrm{let\ } #1 \mathtt{\ :=\ } #2 \mathrm{\ in}}

\newcommand{\Nonempty}{\ensuremath{\mathsf{Nonempty}}}
\newcommand{\Readable}{\ensuremath{\mathsf{Readable}}}
\newcommand{\Writable}{\ensuremath{\mathsf{Writable}}}
\newcommand{\Freeable}{\ensuremath{\mathsf{Freeable}}}

\newcommand{\ParCtx}{\ensuremath{\mathsf{ParCtx}}}
\newcommand{\ForCtx}{\ensuremath{\mathsf{ForCtx}}}
\newcommand{\SinCtx}{\ensuremath{\mathsf{SinCtx}}}
\newcommand{\rop}{\mathit{rop}}
\newcommand{\leader}{\ensuremath{\mathit{leader}}}
\newcommand{\ectx}{\ensuremath{\mathit{ectx}}}
\newcommand{\pctx}{\ensuremath{\mathit{pctx}}}

\newcommand{\idx}{\ensuremath{\mathit{idx}}}
\newcommand{\vrc}{\overrightarrow{\mathit{rc}}}
\newcommand{\SSkip}{\cdt{SSkip}}
\newcommand{\SPar}{\cdt{SPar}}

\newcommand{\SFor}{\cdt{SFor}}

\newcommand{\SSin}{\cdt{SSingle}}

\newcommand{\SRed}{\cdt{SRed}}
\newcommand{\SPriv}{\cdt{SPriv}}
\newcommand{\SPrivEnd}{\cdt{SPrivEnd}}
\newcommand{\SBar}{\cdt{SBarrier}}
\newcommand{\SBRB}{\cdt{SBRB}}
\newcommand{\KStop}{\cdt{KStop}}

\newcommand{\IF}{\ensuremath{\mathrm{if}}\>}
\newcommand{\THEN}{\>\ensuremath{\mathrm{then}}\>}
\newcommand{\ELSE}{\>\ensuremath{\mathrm{else}}\>}
\newcommand{\WHEN}{\>\ensuremath{\mathrm{when}}\>}
\newcommand{\WHERE}{\>\ensuremath{\mathrm{where}}\>}
\newcommand{\TT}{\ensuremath{\mathsf{true}}}
\newcommand{\FF}{\ensuremath{\mathsf{false}}}
\newcommand{\id}{\ensuremath{\fnt{id}}}

\newcommand{\StepClight}{\textsc{Step-Clight}}
\newcommand{\StepPriv}{\textsc{Step-Priv}}
\newcommand{\StepPrivEnd}{\textsc{Step-Priv-End}}
\newcommand{\StepRedL}{\textsc{Step-Red-Leader}}
\newcommand{\StepRedNL}{\textsc{Step-Red-Not-Leader}}
\newcommand{\StepPar}{\textsc{Step-Parallel}}
\newcommand{\StepBar}{\textsc{Step-Barrier}}
\newcommand{\Alloc}{\ensuremath{\mathsf{Alloc}}}
\newcommand{\Free}{\ensuremath{\mathsf{Free}}}
\newcommand{\Read}{\ensuremath{\mathsf{Read}}}
\newcommand{\Write}{\ensuremath{\mathsf{Write}}}
\newcommand{\MemEv}{\ensuremath{\mathsf{MemEv}}}
\newcommand{\SyncEv}{\ensuremath{\mathsf{SyncEv}}}
\newcommand{\BarEv}{\ensuremath{\mathsf{Bar}}}
\newcommand{\Par}{\ensuremath{\mathsf{Par}}}
\newcommand{\Event}{\ensuremath{\mathsf{Event}}}

\newlength\shlength
\newcommand\longvec[2][0]{\setlength\shlength{#1pt}%
      \stackengine{-4.6pt}{$#2$}{\smash{$\kern\shlength%
        \stackengine{5.55pt}{$\mathchar"017E$}%
          {\rule{\widthof{$#2$}}{.57pt}\kern.4pt}{O}{r}{F}{F}{L}\kern-\shlength$}}%
          {O}{c}{F}{T}{S}}
      

%% file: intros.tex
\section{Introduction}\label{sec:intros}

The OpenMP framework~\cite{openmp_spec} provides a straightforward entry point for shared-memory parallel programming. Instead of explicitly creating and managing threads and locks, a programmer can write a sequential program and then use the OpenMP API to instruct the compiler to automatically parallelize the code, e.g., by distributing the iterations of a loop across multiple threads. The core of the API is a set of compiler directives (\code{#pragma}s in C and C++) that can be added before the code segments to be parallelized, supported by common compilers including GCC and Clang. However, the simplicity of this interface can be deceptive---OpenMP still suffers from all the complexity of shared-memory concurrency, and choosing the wrong directives can easily introduce race conditions or logic errors.

Prior work~\cite{atzeni} defined formal semantics for OpenMP for the purpose of race detection, modeling the synchronization effects of key directives and the memory accesses they produce. This is useful for detecting concurrency bugs, but does not fully address the question ``what is this OpenMP program allowed to do?'' In this work, we aim to answer that question with a formal semantics for OpenMP at the language level, specifically for C programs with OpenMP annotations. Our semantics builds on CompCert's C semantics~\cite{compcert} and its extension to concurrency~\cite{cpm}, giving us high confidence in our base language semantics; our task is then to accurately capture the allowed behavior of OpenMP directives.

There are three main challenges to giving formal semantics for OpenMP at the language level:

\begin{enumerate}
\item The OpenMP API is not a library of functions but a set of compiler directives that modify the behavior of the code they apply to. These directives cannot be modeled as special function calls executed in a single step, as is standard in concurrent extensions of CompCert, since, at a minimum, they may have effects at both the beginning and the end of the affected block.
\item Unlike standard fork-join concurrency, in which an unstructured collection of threads interacts peer-to-peer via shared memory or message passing, OpenMP describes the behavior of dynamically created \emph{teams} of threads, and the validity and effects of directives depend on the team to which the executing thread currently belongs. Any formal model of OpenMP must track the team structure of the executing threads in addition to more usual forms of state.
\item OpenMP directives have subtle effects on local variables: depending on the position of its declaration and the details of the directive, a variable may be treated as shared memory across multiple threads or split into private copies that may be reconciled at the end of a parallel section---and mistakes in variable-handling directives are a major source of OpenMP-induced bugs.
\end{enumerate}
By carefully modeling the scope of OpenMP directives and their effects on team structure and local variables, in addition to their concurrency and memory behavior, our semantics provide a basis for determining the allowed behavior of an OpenMP-annotated program, and thus ultimately for determining whether a given parallelization preserves the intended behavior of the code. Our semantics is stuck in the presence of memory errors or data races, making it useful for bug-finding. The semantics can also serve as a basis for analysis tools that prove the safety and functional correctness of OpenMP C programs.

\subsection{A Motivating Example}\label{sec:motivationexample}

\begin{figure}[htb]
{\small
\begin{tabular}{l@{\qquad\qquad}l}
{\captionsetup[lstlisting]{margin = 4 mm}
\begin{lstlisting}[xleftmargin= 4 mm, basicstyle=\footnotesize]
// program starts with one thread t1
#pragma omp parallel num_threads(2)
{// a team of 2 threads {t1, t2} is created
  f();
#pragma omp for
  for (i=0; i<6; i+=1) {
    // each thread only works a loop partition
    loop_body
  }
}// end of parallel region, only thread is t1
\end{lstlisting}
}
&
{\captionsetup[lstlisting]{margin = 4 mm}
  \begin{lstlisting}[xleftmargin= 4 mm, basicstyle=\footnotesize]
int k = 0, s = 0;
#pragma omp parallel for
    reduction(+ : k, s)
for (int i = 0; i < 128; i++) {
  k = k + 3;
  s = s + k;
}
\end{lstlisting}
}
\end{tabular}
}
\caption{Left: a typical OpenMP program. Right: an incorrect OpenMP program.}
\label{fig:program-ex}
\end{figure}

We illustrate the uses and challenges of our semantics with a typical OpenMP program (shown on the left of \Cref{fig:program-ex}) and a typical OpenMP mistake (shown on the right).
The core mechanism of OpenMP is a set of \emph{directives}, compiler annotations to parallelize a sequential program, implemented as \code{#pragma}s in C. Some directives are associated with the following code block, binding to them to form a \emph{construct}. Other directives, such as \code{barrier}\!, do not associate with a statement and form a construct by themselves.

In the left program, \code{parallel num\_threads(2)} is an OpenMP directive that specifies that a \emph{team} of two threads should execute the associated code in parallel, with all threads executing the same code block. 
The \code{omp for} directive on line 5 is always associated with a for-loop, and divides the workload of the associated for-loop among the threads of the executing team. When this program executes, it begins with a single thread of execution; at line 2, the \code{parallel} directive creates a team with two threads; and the iterations of the for-loop on line 6 are partitioned across the threads, e.g., each thread may execute three steps of the loop. 

This illustrates several challenges in defining the semantics of OpenMP:

\begin{enumerate}
\item The \code{for} directive is team-sensitive, i.e., the partitioning of the loop at line 5 depends on the number of threads set up in line 2.
\item OpenMP does not check for race-freedom: if \code{loop\_body} is not safe to run in parallel, the program will not execute correctly.
\item At the end of the \code{parallel} block at line 10, there is an implicit barrier: every thread in the team must wait until all threads finish their tasks.
\end{enumerate}
Our semantics must track the team structure throughout the program, detect race conditions among threads, and capture the effects of both the start and end of a construct.

Another difficulty arises from the complex interaction between OpenMP directives and local variables, as shown in the right program in \Cref{fig:program-ex} (adapted from Ahmed et al.~\cite{openmp-diff-test}). When run sequentially, this program sums the multiples of 3 from 3 to 3 * 128. The \code{reduction} clause aims to divide the work across threads; each thread receives its own copies of \code{k} and \code{s}, and the copies are summed at the end of the loop. For \code{k}, this will indeed produce the right result (3 * 128 = 384). However, depending on the distribution of loop iterations among threads, some per-thread values of \code{k} will be reused in computing \code{s}, while other values will never be reached, producing an incorrect final value of \code{s}. This program does not contain a data race---it is not buggy in and of itself---but it does not have the same semantics as the corresponding sequential program, and we can only determine this by modeling the values of \code{k} and \code{s} in each thread. Notably, the main prior work on the semantics of OpenMP~\cite{atzeni} works at the level of memory operations, obscuring the effects of OpenMP directives on variables and control flow. In contrast, our OpenMP semantics directly models the effects of directives on the C program state, allowing us to detect errors like that on the right of \Cref{fig:program-ex}.

\subsection{Contributions and Roadmap}
The main contribution of this paper is, to our knowledge, the first formalization of OpenMP semantics in C at the language level. We formalize 5 constructs and 3 clauses representing commonly used OpenMP directives. In particular, we give the first formal semantics for the \code{for} and \code{single} clauses, which control work-sharing within teams, and the \code{private} and \code{reduction} clauses, which control the interaction between threads and local variables. 
Our novel \emph{team tree} construction captures the essence of OpenMP's execution model and could be reused to define OpenMP semantics in other languages, such as C++ or FORTRAN. Our semantics is based on per-thread permissions to memory, and we prove that any execution in our semantics is data-race-free.

The paper is structured as follows. In \Cref{sec:overview,sec:teamtree}, we describe the syntax of \lang and the team tree model that plays a key role in \lang program states. In \Cref{sec:rules}, we formally present our operational semantic rules for OpenMP and describe in detail the semantics of the major constructs and clauses of OpenMP: parallel and worksharing constructs, barriers, and privatization and reduction clauses. The ClightOMP semantics is formalized in the Rocq proof assistant\footnote{Available in \url{https://github.com/dkxb/ClightOMP}.}. 
\Cref{sec:property} describes the properties of the semantics, including a proof of data-race freedom for any execution. Finally, in section \ref{sec:comparison} we compare with related work, and in section \ref{sec:conclusion} we summarize our results and discuss future work.

%% file: background.tex
\section{Background: The Concurrent Permission Machine}
\label{sec:background}

Since our goal is to give precise semantics for C programs with OpenMP directives, we begin with the authoritative formal semantics of ordinary C, the Clight semantics of the CompCert verified compiler~\cite{compcert}. However, CompCert's model of C does not include concurrency---it is strictly a sequential semantics. Several more recent works, including CASCompCert~\cite{cascompcert} and the Concurrent Permission Machine (CPM)~\cite{cpm}, lift Clight to concurrency by defining a multithreaded machine where each step is either a sequential Clight step or a concurrency operation; our semantics builds on the CPM approach. In this section, we explain the parts of the CPM's program states and step rules that will be necessary to understand our semantics. 

A CPM program state has the form ${\langle \mho, \tp, m\rangle}$, consisting of a schedule $\mho$, a thread pool $\tp$, and a shared memory $m$. The schedule $\mho$ is a list of thread IDs that determines the order in which threads interleave; the possible behaviors of a program are its behaviors under all schedules. The thread pool $\tp$ maps each thread ID $i$ to its local state $P(i)=\tuple{\sigma_i, \pi_i}$, where $\sigma_i=\tuple{s, \lenv, k}$ is a Clight state containing the statement $s$ to be executed, the local variable environment $le$ that maps variable names to their addresses in the memory $m$, and the continuation $k$; $\pi_i$ is the \emph{permission map} of thread $i$, describing the thread's access level to each memory location ($\mathsf{None}$, \Nonempty, \Readable, \Writable, or \Freeable). These access levels allow the CPM to detect data races: at each step of execution, the permissions held by all threads must be compatible with each other, so that e.g. if one thread holds {\Writable} access to a location, no other thread can hold {\Readable} access. Finally, $m$ is a CompCert memory \cite{besson2015concrete}. For this paper, it can be viewed as a pair $\tuple{m_v, \pi_m}$ where $m_v$ maps addresses to values, and $\pi_m$ is a placeholder for the executing thread's permission map. We define a few shorthands: $m(l)\triangleq m.1(l)$ for loading a value from address $l$, $\fnt{perm}(m)\triangleq m.2$ for getting the permission map, $m|_{\pi}\triangleq \tuple{m.1, \pi}$ for replacing the permission in $m$ with $\pi$.

The operational semantics of the CPM includes two kinds of steps. First, an individual thread may take a step according to the sequential Clight semantics $\rightarrow_{\textit{sq}}$:
\[
\inferrule[Step-Clight]
    {\tp(i)=\tuple{\sigma_i, \pi_i} \and
    \tuple{\sigma_i, m|_{\pi_i}} \rightarrow_{\textit{sq}} \tuple{\sigma'_i, m'} \and
    \tp[i \upd \tuple{\sigma_i', \fnt{perm}(m')}] = \tp'
    }
    {\langle i \cdot \mho, \tp, m\rangle \rightarrow
     \langle \mho, \tp', m' \rangle}
\]
If $i$ is the next thread scheduled to execute, and its current state is $\tuple{\sigma_i, \pi_i}$ where $\sigma_i$ can execute a sequential C statement, then it executes as follows. First, it combines its permissions $\pi_i$ with the shared memory $m$ to obtain a thread-local ``view'' of memory $m|_{\pi_i}$, which has all the same values as $m$ but allows only operations that the thread has permission for (e.g., if $\pi_i$ only has \kw{Readable} access to a location $\ell$, then a write to $\ell$ in $m|_{\pi_i}$ will fail). The local state $\sigma_i$ and local memory $m|_{\pi_i}$ combine to make an ordinary Clight state, which may step according to Clight's sequential semantics to a new state $\tuple{\sigma'_i, m'}$. The new memory $m'$ then becomes the new shared memory, and the thread pool records the new local state $\sigma_i'$ and permissions $\fnt{perm}(m')$ for thread $i$.

The \textsc{Step-Clight} rule applies when the next operation in the scheduled thread is a sequential statement. When it is a concurrency operation, we instead look for a specific rule for that operation in the CPM. For instance, the rule for thread creation is:
\[
\inferrule[Step-Spawn]
    {
        j \not\in \tp \\
        \pi_i = \pi_i' \oplus \pi_j' \\
    }
    {
        \TUPLE{
            i \cdot \mho, \tp \Bracket{
                i\upd \Tuple{\tuple{\cdt{spawn}(f, a), \lenv, k}, \pi_i}
        }, m}
        \rightarrow \\
        \TUPLE{
            \mho, \tp \Bracket{
                i \upd \Tuple{\tuple{\SSkip, \lenv, k}, \pi_i'}, 
                j \upd \Tuple{\tuple{f(a), \emptyset, \KStop}, \pi_j'}
        },  m}
    }
\]
The in-place update notation $\tp[i \upd (\sigma, \pi)]$ is the thread pool $\tp$ but thread $i$ has state $(\sigma, \pi)$. If thread $i$ wants to spawn a new thread executing function $f$ with argument $a$, it can do so by creating a new thread whose code is $f(a)$ with a fresh index $j$ not yet in $\tp$. The parent thread's statement is replaced by the no-op statement $\SSkip$, so its next step will be decided by its continuation $k$. The new thread has an empty local variable environment $\emptyset$ and a continuation $\KStop$, meaning that the thread halts after $f(a)$. The permissions $\pi_i$ of $i$ must be split into two parts, one retained by the parent thread ($\pi_i'$) and one given to the child thread ($\pi_j'$).

The CPM includes rules for thread creation and lock operations (allocation, deallocation, acquire, and release). Taken together, these implement a concurrent extension of Clight with an SC-DRF (Sequential Consistency for Data-Race-Free programs) memory model: programs with data races are stuck (due to permission conflicts), while programs without data races have sequentially consistent behavior, interleaving the steps of individual threads. We develop our OpenMP semantics by extending the CPM with rules for OpenMP directives like parallel regions and work-sharing constructs, extending the state model as necessary. 
OpenMP's memory model is also SC-DRF as long as programs do not use weak-memory \kw{atomic} constructs (which we do not support), so the CPM provides a reasonable basis for capturing the concurrency behavior of C programs with OpenMP directives.


%% file: formal.tex
\definecolor{color1}{HTML}{FEEF5B}
\definecolor{color2}{HTML}{FFA570}
\definecolor{color3}{HTML}{afca54}
\definecolor{blackish}{HTML}{353132}
\definecolor{borderColor}{rgb}{1,0,0}
\tikzset{myArrow/.style={->, >=Straight Barb, line width=1pt, draw=black}}
\tikzset{square/.style={regular polygon,regular polygon sides=4,rounded corners, minimum width=2.5cm, minimum height=2.5cm}}
\tikzset{sq1/.style={square, fill=color1}}
\tikzset{sq2/.style={square, fill=color2}}
\tikzset{sq3/.style={square, fill=color3}}
\tikzstyle{every node}=[font=\Huge]

\section{Syntax and Semantics of ClightOMP}

ClightOMP extends the syntax of Clight~\cite{compcert}, the subset of C formalized in CompCert, with OpenMP directives as new statements. The full syntax is defined in \Cref{fig:ClightO-syn}. In this section, we introduce the syntax and give an operational semantics to these statements, building on Clight and the CPM semantics. 

\subsection{Syntax: Extending Clight}
\label{sec:overview}

The extended statements for constructs, such as the one for the parallel construct, $\SPar_{\mathit{idx}}\ \mathit{nc}\ \mathit{pc}\ \vec{rcs}\ s$, have 4 parts in general: $\SPar$ specifies the kind of directive; $\mathit{nc}$, $\mathit{pc}$ and $\vec{rcs}$ are construct-specific \textit{clauses} that specify additional behaviors; $s$ is the statement affected by the directive; lastly, the subscript $\mathit{idx}$ is an index that uniquely identifies the construct. Concretely, line 2-10 in \Cref{fig:program-ex} is encoded as $\SPar_{\textit{idx}}\ 2\ []\ []\ s$, where $\textit{idx}$ is some unique index and $s$ is the statement for line 3-10.
We explain them in detail.

In addition to \kw{SPar} for the \code{parallel} directive, we define 4 core statements for directives in total: \kw{SFor} for the \code{for} directive, \kw{SSingle} for the \code{single} directive, and \kw{SBarrier} for the \code{barrier} directive.

The \!\code{parallel}\!, \code{for}\!, and \code{single} directives are lexically associated with the following statement (or block) and semantically affect that statement, and we encode the affected statement as the last argument $s$ of \kw{SPar}, \kw{SFor} and \kw{SSingle}. These directives and their associated statements form a \emph{construct}. The \code{barrier} directive does not associate with a statement and forms a barrier construct on its own.

A construct statement can have 0 or more clauses. The $\mathit{nc}$ clause specifies the number of threads in the \kw{parallel} construct\footnote{In the OpenMP standard, the actual number of threads is determined by a number of factors at runtime; currently, we assume that the number is determined statically, but our semantics could be modified to follow the standard more closely.}. A privatization clause $\mathit{pc}$ gives a list of variable names to privatize. Constructs that support reduction have a list of reduction clauses $\ \vec{rcs}$, and a reduction clause $\mathit{rc}$ is a pair of a reduction identifier $\mathit{rid}$ that specifies the operation for combining the reduction contributions, and a list of variables to be reduced---for example, at the end of a for loop that computes a sum in variable \code{sum}, we may sum every thread's contribution to compute the final result using a reduction clause \code{(+:sum)}. Both \kw{SPar} and \kw{SFor} support a privatization clause and a list of reduction clauses; \kw{SSingle} supports privatization but not reduction. When directives with these clauses are executed, we generate runtime-only instructions \kw{SPriv}, \kw{SPrivEnd}, \kw{SRed} that carry a mix of statically and dynamically recorded information about how to perform privatization and reduction; we discuss these further in \Cref{sec:variables}.

There are several places in the semantics of OpenMP where we need to know that statements executed by different threads are not merely equal but in fact the same syntactic occurrence of the statement: for instance, threads in a team synchronize at a barrier only if they have all reached the same barrier. For this purpose, we give every construct statement (\kw{SPar}, \kw{SFor}, \kw{SSingle}, \kw{SBarrier}) a unique integer index $n$. We also generate \kw{SBarrier}s at runtime to make implicit barriers explicit, and they need to be further distinguished with an optional statement $s$ and a boolean $b$; we discuss these details in \Cref{sec:barrier}.

\subsection{Synchronization with the Team Tree}
\label{sec:teamtree}

\ignore{
\begin{figure}[t]
{\small
{\captionsetup[lstlisting]{margin = 4 mm}
  \begin{lstlisting}[xleftmargin= 4 mm]
    // program starts with one thread t1
    #pragma omp parallel num_threads(2)
    {
    %*\colorbox{blueannoback}{@Entity}*)
    // a team of 2 threads {t1, t2} is created
        f();
    #pragma omp for
        for (i=0; i<6; i+=1) {
            // each thread only works a partition of the loop
            loop_body
        }
    }
    // end of parallel region, only thread is t1
  \end{lstlisting}
}
}
\caption{An OpenMP program.}
\label{fig:program-ex}
\end{figure}
}

As described in \Cref{sec:background}, the CPM models the execution of concurrent programs by tracking a collection of per-thread states and a single shared memory. For OpenMP, we need to track an additional kind of information: the hierarchical structure of spawned threads, which are organized into \emph{teams}, and the states of their in-progress OpenMP operations. Since the team hierarchy in the program forms a tree (each team is spawned by a single parent thread, which may itself be part of a team), we organize OpenMP-specific state information into a tree that we call the \emph{team tree}, denoted by \ttree, which we include in program states alongside the thread pool $\tp$ and memory $m$. The execution of sequential C code (rule \textsc{Step-Clight}) is not affected by the addition of the team tree, as regular per-thread execution is independent of OpenMP states.


We define the team tree by defining its $\mathit{Node}$s. Each node corresponds to a thread of execution.
When a thread spawns a team, its node adds child nodes for the team's threads, along with some \emph{team contexts} that record OpenMP state common to the entire team. Formally:

\begin{defi}[Node and Team]
\emph{
A $\mathit{Node}$ of a team tree is a tuple $(t, \mathit{tm})$ where $t$ is a thread ID
and $\mathit{tm}$ is either some $\Team$ or $\bot$. A $\Team$ is a tuple $(\leader, \pctx,\ectx, \overrightarrow{\mathit{mates}})$ where $\leader$ stores the thread ID for the leader of the team (which is always the thread that spawns the team), $\pctx$ is the parallel context of the team, $\ectx$ is a team-executed context representing the currently active team-executed construct\footnote{One team-executed context suffices because all threads in a team can be in at most one team-executed construct at a time: the OpenMP standard effectively requires that, inside a parallel region, team-executed contexts cannot be nested
. From the standard: ``a team-executed region (\kw{for}, \kw{single}, \kw{barrier}) may not be closely nested (nested in the same parallel region) inside a partitioned worksharing region (\kw{for} or \kw{single})''. We implement the \kw{barrier} construct as a single atomic operation, and all of our team-executed constructs have an implicit barrier at the end of the construct, so no thread can enter a new team-executed construct while some of its teammates are still executing the old one. } or $\bot$, and $\overrightarrow{\mathit{mates}}$ is a list of $\mathit{Node}$s for the members of the team.
}
\end{defi}

The program starts with the initial team tree $(t_1, \bot)$, containing a single $\mathit{Node}$ for the initial thread $t_1$ and no team.

The parallel context $\ParCtx_{\textit{idx}}$ of a $\Team$ records the index $\textit{idx}$ of the parallel construct that spawned it; $\ectx$ exists if some member of the team is inside a team-executed construct, and records information specific to that construct. It has the form $\ForCtx_\idx\> w$ for a \kw{for} construct, or $\SinCtx_\idx\> j$ for a \kw{single} construct. We will explain the definition and the usage of the contexts when we introduce the semantics in the following sections. 

The thread that spawns a team becomes the \textit{leader} of this team; it participates in this team as a new node with the same thread ID (therefore this new node is also a child of the original node). Spawned threads can also spawn teams themselves, resulting in nested teams.  \footnote{In this work, we only model programs that are executed by one \emph{device}, and all synchronization is due to the OpenMP pragmas that we define. We do not model the \kw{teams} construct, or the notion of a \kw{league} of teams, and all threads are descendants of the initial thread. } Only the leaf node of $i$ in a team tree $\ttree$ is immediately relevant to its execution, and we call this node the \textit{active} node of $i$, denoted as $\ttree(i)$. $\ttree(i)$ is only defined if the active node for $i$ is unique.

In this section, we will describe the operations on team trees informally; formal definitions are in \Cref{fig:team-ops}.

\ignore{
    More specifically, the team executes the \emph{parallel region} of this construct:
    
    \begin{defi}[Informal, OpenMP\footnote{Definitions labeled \textsc{OpenMP} are adapted from the OpenMP standard, rather than being specific to our semantics.}]
    \emph{A \emph{region} is the computations encountered during a specific execution of some construct, as well as the computational content of the start and end of that construct.}
    \end{defi}
    
    \begin{remark}[OpenMP]
        \emph{The \kw{parallel} construct gives rise to a \emph{parallel region}, the \kw{for} construct a \emph{worksharing region}, and the \kw{barrier} construct a \emph{barrier region}. These regions can be nested, but as we mentioned before, a region induced by a team-executed construct cannot be nested immediately within another without a parallel construct in between them. The outermost part of the program is an \emph{implicit} parallel region marked as parallel region I in \Cref{fig:program-nested-par-region}}. 
    \end{remark}
}

\ignore{
    In general, this means that a thread has one node in the tree for each parallel region that it has entered, and only the node for its current parallel region and the team contexts for this region are immediately relevant to its execution. We call such nodes \emph{active}, and other nodes \emph{inactive}:
    
    \begin{defi}
        \emph{A node in the team tree is \emph{active} if it is a leaf node in the team tree.}
    \end{defi}
}

\subsection{Overview of the Semantics}
\label{sec:rules}
Our semantic rules for the OpenMP constructs are shown in \Cref{fig:semantics-2,fig:semantics-3}. As described in the previous section, our step relation is on configurations of the form ${\langle \mho, \tp, \ttree, m\rangle}$, where $\mho$ is a thread schedule, $\tp$ is a collection of per-thread states including control state (for example, the current Clight statement and continuation), local variables, and permissions, $\ttree$ is a team tree, and $m$ is the shared memory.
Each rule begins with a schedule of the form $i \cdot \mho$, indicating that $i$ is the next thread to execute, and then explains the effect on the configuration if $i$ is executing one of the OpenMP directives (regular program steps do not affect the team tree, and so use the \textsc{Step-Clight} rule from \Cref{sec:background} as is). In the rest of this section, we describe each of these rules in detail.

\paragraph{Judgement and Notation}
Although OpenMP constructs are essentially concurrent, some can be described as local transitions of a single thread, together with synchronizations via the team tree $\ttree$. This is expressed with the judgement $s \conc \lenv \conc m \xmapsto[i,\tp, \ttree]{h} s' \conc \lenv' \conc m'$: it says a thread $i$ updates its statement $s$ and variable environment $\lenv$ (along with the global memory $m$) to $s', \lenv'$ (and $m'$), while updating the team tree with $h$, where $h$ may bind the original thread pool $\tp$ and team tree $\ttree$, and the new team tree is $h(\ttree, i)$. We omit $h$ if the team tree is not updated. The $\textsc{Step-Thread}$ rule in \Cref{fig:semantics-2} lifts a single-thread step by thread $i$ to the full configuration, writing back the updated $s'$ and $\lenv'$ to the thread pool $\tp$, updates $\ttree$ to $h(\ttree, i)$, and $m$ to $m'$. Thread $i$'s continuation $k$ stays the same. Like $\textsc{Step-Clight}$, thread $i$ can only step with $m|_{\pi_i}$, i.e. $m$ equipped with thread $i$'s initial permission $\pi_i$; we then save $\fnt{perm}(m')$, the permission in the new memory $m'$, as thread $i$'s new permission. \textsc{Step-Parallel} and \textsc{Step-Barrier} change the state of multiple threads and are not expressed in this style. We abuse the $\lenv|_{\vec{x}}$ notation to mean the map $\lenv$ with the domain restricted to $\vec{x}$.

We explain the basic semantics of the constructs that enable parallelism and worksharing in \Cref{sec:parallelism-and-work-sharing}. Privatization and reduction clauses can be attached to these constructs to fine-tune their behavior; their semantics are explained in \Cref{sec:variables}.

\begin{DIFnomarkup}
\begin{figure}[t]
{
\scriptsize 
\centering
\begin{mathpar}

{\small
    \begin{aligned}
    \SBRB_{\idx,s,r} \triangleq & \mathtt{\ let\ } s_r \mathtt{\ := \ } \IF r.1 = [] \THEN \SSkip \ELSE \SBar_{\idx,\FF,s}; \SRed\>r \mathtt{\ in} \\[-1em]
      & \quad s_{r};\SBar_{\idx,\TT,s}
    \end{aligned}
}

\vspace{1em}

\inferrule[Step-Parallel]
    {
        \forall j \IN \vect. j \notin \tp \\
        |\vect|=nc-1 \\
        r \triangleq (\vrc, \lenv|_{\vec{x}}) \\
        s' \triangleq \SPriv\>{\vec{x}}\> (s;\SBRB_{\idx,\SPar,r}) \\
        \pi_i = \pi_i' \oplus (\bigoplus\nolimits_{j\IN \vect} \pi_j') \\ 
    }
    {
      \TUPLE{i \cdot \mho, \tp\Bracket{i\upd \Tuple{\tuple{\cdt{SPar}_\mathit{idx}\ \textit{nc}\ \vec{x}\ \vrc\ s, \lenv, k}, \pi_i}}, \ttree, m}
      \rightarrow \\
      \TUPLE{ \mho, \tp\Bracket{i\upd \Tuple{\tuple{s', \lenv, k}, \pi_i'}, \forall j \IN \vect.\> j \upd \Tuple{\tuple{s', \lenv, \KStop}, \pi_j'}},  \fnt{spawn\_team}(\ParCtx_{\mathit{idx}}, \vect)(\ttree, i), m}
    }

\inferrule[Step-Barrier]
    {
        \vect \triangleq \TeamTIDS (\ttree, i) \\
        \ttree(i).\fnt{ectx} \neq \bot \iff s=\SFor \lor s=\SSin \\
        \bigoplus\nolimits_{j\IN \vect} \pi_j = \bigoplus\nolimits_{j\IN \vect} \pi'_j
    }
    {
        \TUPLE{
            i \cdot \mho, \tp\Bracket{\forall j \IN \vect, j\upd \Tuple{\tuple{\cdt{SBarrier}_\mathit{idx, b, s}, \lenv_j, k_j}, \pi_j}},  \ttree, m
        }   
        \rightarrow \\
        \TUPLE{
            \mho,\tp\Bracket{\forall j \IN \vect, j\upd \Tuple{\tuple{\SSkip, \lenv_j, k_j}, \pi_j'}}, \IF b \THEN \fnt{end_ctx}(s)(\ttree, i) \ELSE \ttree, m\rangle
        }
    }


\inferrule[Step-Thread]
    {
        s \conc \lenv \conc m|_{\pi_i} \xmapsto[i,\tp, \ttree]{\mathit{h}} s' \conc \lenv' \conc m'  \\
    }
    {
      \langle i \cdot \mho, \tp\Bracket{i \upd \Tuple{\tuple{s, \lenv, k}, \pi_i}}, \ttree, m\rangle
      \rightarrow \langle
      \mho, \tp\Bracket{i \upd \Tuple{\tuple{s', \lenv', k}, \fnt{perm}(m')}}, \mathit{h}(\ttree, i), m'\rangle
    }

\inferrule[Step-For]
    {
        w \triangleq \fnt{partition}(s) \\
        s' \triangleq \fnt{set\_stmt\_to\_partition}(s, w(i)) \\
        r \triangleq \tuple{\vrc, \lenv} \\
        s'' \triangleq \SPriv\>{\vec{x}}\> (s'; \SBRB_{\idx,\SFor, r}) \\
    }
    {
        \cdt{SFor}_\mathit{idx}\> \vec{x}\> \vrc\> s \conc \lenv \conc m  \xmapsto[i]{\fnt{sync\_ectx}(\ForCtx_\mathit{idx}\> w)} s' \conc \lenv \conc m
    }

\inferrule[Step-Single]
    {
        s' \triangleq \SPriv\>{(\IF i=j\> \THEN \vec{x} \ELSE [])}\> ((\IF i=j \THEN s \ELSE \SSkip); \SBar_{\idx,\TT, \SSin}) \\
    }
    {
    \cdt{SSingle}_\mathit{idx}\> \vec{x}\> s \conc \lenv \conc m  \xmapsto[i]{\fnt{sync_ectx}(\SinCtx_\mathit{idx}\ j) }  s' \conc \lenv \conc m
    }

  \end{mathpar}
}

\caption{Semantic rules: \kw{parallel}, \kw{for}, \kw{single}, \kw{barrier}}
\label{fig:semantics-2}
\vspace{-0.5cm}
\end{figure}
\end{DIFnomarkup}

\newpage
\subsection{Parallelism and Work-Sharing}
\label{sec:parallelism-and-work-sharing}

\begin{figure}[htb]
{
\vspace{-0.5cm}
\small
\begin{minipage}{.5\textwidth}
\begin{lstlisting}[style=customc]
#pragma omp parallel num_threads(5)
{
#pragma omp for
    for(int i = 0; i < 100; i++) {
        printf("%d\n", i);
    }
}
\end{lstlisting}

\caption{An OpenMP program with \phantom{blah} \kw{parallel} and \kw{for} constructs.}
\label{fig:ex-for-prog}
\end{minipage}
%
\begin{minipage}{.03\textwidth}
\phantom{a}
\end{minipage}
\begin{minipage}{.5\textwidth}
\begin{lstlisting}[style=customc, showlines=true]
#pragma omp parallel num_threads(2)
{
#pragma omp single
    { printf("Before\n"); }
    printf("After\n");
}

\end{lstlisting}
\caption{An OpenMP program with a \kw{single} construct.}
\label{fig:single-prog}
\end{minipage}
\vspace{-0.9cm}
}
\end{figure}

\subsubsection{The Parallel Construct}

The main structuring construct in OpenMP is the \kw{parallel} construct, which creates a team of threads to execute a parallel region. When a thread $i$ reaches a \cdt{SPar} statement, the rule \textsc{Step-Parallel} forks a team of threads executing the region's body $s$ in parallel. For example, in \Cref{fig:ex-for-prog}, a team of 5 threads executes lines 2--6 in parallel. Thread $i$ also participates in the team and becomes the \emph{primary thread} of the team, so we only need to add $\mathit{nc}-1$ new threads. We add these threads both to the team tree \ttree via the \fnt{spawn\_team} operation and to the thread pool \tp, where we record each thread's code, local environment, and permissions. At a first pass, the code executed by the spawned threads is $(s; \kw{SBarrier}_{\TT,\kw{SPar}})$: each thread executes the body $s$ of the parallel region, and then a barrier synchronizes all the threads in the team and ends the parallel region. 

For the permissions, all teammates split thread $i$'s original permissions arbitrarily---a given memory location may be written by a single thread in the team or read by all of them (but not both). In this case, the original permission $\pi_i$ is redistributed among the team ($\pi_i'$ and $\pi_j'$ for $j\IN \vec{t}$).

Two points further complicate this picture: \emph{privatization} and \emph{reduction}. Each thread wraps its code in a \kw{SPriv} command that creates private copies of variables as specified in the \kw{SPar} directive's \kw{private} clause. Furthermore, if the directive's \kw{reduction} clause is non-empty, the threads must execute additional commands (denoted by \kw{SBRB}) to perform reduction at the end of the region. We discuss privatization and reduction in \Cref{sec:variables}.




\subsubsection{The For Construct}
\label{sec:for}

The \kw{for} construct is the most common form of work-sharing in OpenMP, distributing the iterations of a for loop across the threads in a team. 
For example, the program in \Cref{fig:ex-for-prog} distributes its loop to a team of 5 threads, each of which executes some portion of the loop in parallel. The program will still print each of 0 to 99 once, but in an arbitrary order\footnote{This loop body is safe to execute in parallel because the iteration variable \code{i} is made \emph{private}; we explain this in more detail in the following sections.}
 determined by the partition of iterations and the order of parallel execution.



The rule \textsc{Step-For} applies when a thread reaches a \kw{SFor} pragma applied to the loop statement $s$. The OpenMP standard requires that $s$ be in \emph{canonical loop nest form}\footnote{Our semantics formalizes requirements about the initializer, the condition, and the increment statements in the loop. The OpenMP specification imposes further requirements on the loop body: there must be no \code{break} or \code{continue} statements, and the iteration variable must not be modified during execution of the loop body, preventing partial or skipped execution of some iterations. Since this is a syntactic constraint, we assume that a parser has already implemented this check when it generates the ClightOMP language.},
essentially restricting the syntax of the loop so that it is possible to compute the \emph{logical iterations} at runtime by just looking at the initializer, increment, and condition statements of the associated loop. We then choose a partition $w$ that maps thread ID to a list of iterations, so $w(i)$ is the iterations for thread $i$. The partition $w$ is arbitrary as long as these iterations combine to a permutation of the original iterations. The team must agree on a context $\ForCtx$ for the loop that includes $w$, guaranteeing that each thread uses the same partition. This is accomplished using the $\fnt{sync_ectx}(\ForCtx_\idx\ w)$ operation: if there is no \ectx\ for its team in \ttree, it is set to be $\ForCtx_\idx w$; otherwise, if there is already some \ectx\ set, $\fnt{sync_ectx}$ is only defined if $\ectx=\ForCtx_\idx\ w$, essentially enforcing the team to agree on the same $w$.
The thread then executes a modified version $s'$ of the loop, where the iterations have been restricted to its piece $w(i)$ by modifying the initialization and test expressions, followed by a closing barrier. As with \kw{parallel}, any privatization and reduction clauses in the \kw{for} pragma are handled by \kw{SPriv} and \kw{SBRB} respectively.





\subsubsection{The Single Construct}
\label{sec:single}

We also model the simpler work-sharing construct \kw{single}, which says that only one thread in the team executes the associated code. For example, in \Cref{fig:single-prog}, exactly one thread prints ``Before'', and 2 threads print ``After'', and ``Before'' is always printed before any ``After''.
As with \kw{for}, the threads must agree on a $\SinCtx$ by $\fnt{sync_ectx}$, but this time its only contents are the id $j$ of the thread that will execute the code. Threads that are not chosen execute no instructions (\SSkip) and privatize no variables ($[]$) in the region. The \SSin\ construct does not support reduction clauses, since only one thread performs computation in the region.

\subsubsection{Barriers}
\label{sec:barrier}

\begin{figure}[htb]
{\small
\vspace{-0.5cm}
\begin{minipage}{.5\textwidth}

\begin{lstlisting}[style=customc]
#pragma omp parallel num_threads(5)
{
    printf("before\n");
#pragma omp barrier
    printf("after\n");
}
\end{lstlisting}
\subcaption{threads wait at a \kw{barrier}\\\hspace{\textwidth}}
\label{fig:barrier-ex}
\end{minipage}
\begin{minipage}{.03\textwidth}
\phantom{blah}
\end{minipage}
\begin{minipage}{.5\textwidth}
    \begin{lstlisting}[style=customc]
#pragma omp parallel num_threads(2)
{
    int i = omp_get_thread_num();
    if (i==0) { bar1(); } 
    else { bar2(); }  
}
\end{lstlisting}
\subcaption{Threads in a team should enter \kw{barrier}s in the same order}
\label{fig:two-barrier-ex}
\end{minipage}
}

\caption{OpenMP programs with the \kw{barrier} constructs.}
\end{figure}

The \kw{barrier} construct creates an explicit barrier that blocks a thread until all threads in its team reach the barrier. For example, in \Cref{fig:barrier-ex}, all 5 threads have to reach the barrier before any of them can proceed, so all ``before''s will be printed before any ``after''.

When thread $i$ takes a \textsc{Step-Barrier}, it first identifies its team members:
\begin{defi}
\emph{The team of a thread $i$ in a team tree \ttree\ is the set of thread IDs of all the siblings of $\ttree(i)$ (including $i$), denoted as $\TeamTIDS (\ttree, i)$. }\footnote{Notice a sibling node of $\ttree(i)$ is not necessarily active since it may spawn another team, but only the thread of the sibling node is in $\TeamTIDS(\ttree, i)$.  }
\end{defi}
\textsc{Step-Barrier} requires all threads in this team wait at the \emph{same} barrier, and the \textit{idx} of the barrier distinguishes syntactically distinct barriers. Consider the program in \Cref{fig:two-barrier-ex}: \lstinline[style=customc]{omp_get_thread_num} \footnote{ An OpenMP thread number uniquely identifies a thread in a team. For thread $i$, the function \lstinline[style=customc]{omp_get_thread_num} returns the index of node $\ttree(i)$ relative to its siblings. Note that this is different from a thread ID, which uniquely identifies a thread in the thread pool. } returns 0 for the leader and 1 for the other thread, so two threads in a team arrive at different barriers and should be stuck according to the OpenMP specification. \footnote{This program terminates for OpenMP implementations in GCC or Clang, and to the best of our knowledge, there are currently no tools that catch this issue. }

Our semantics gives uniform treatment to both explicit \kw{barrier} directives and the implicit barriers at the ends of regions (i.e., the barriers in {\SBRB} generated by rules like \StepPar), and we index $\kw{SBarrier}_{\textit{idx,b,s}}$ with not just $\idx$ but an additional boolean $b$ and an optional statement $s$ (the directive that generates the implicit barrier, or $\bot$ for an explicit barrier)  to handle the subtle differences in their semantics. As shown in the rule \textsc{Step-Barrier}, a barrier can only execute when a full team of threads are all at the same barrier (i.e., a barrier with the same indices \textit{idx,b,s}). Explicit barriers must not be placed inside team-executed regions, which we check via the team's \fnt{ectx}; this requirement comes from the OpenMP specification and serves to rule out, e.g., the case where a barrier in the body of a shared loop is encountered a different number of times by threads executing a different number of iterations of the loop body. 
Further, if the barrier is the closing operation ($b = \mathsf{true}$) for some construct $s$, it removes related context from \ttree\ accordingly with $\fnt{end_ctx}$: for a \kw{for} construct ($s = \SFor$) or a \kw{single} construct ($s=\SSin$), it removes the \fnt{ectx}; for a \kw{parallel} construct ($s=\SPar$), it removes the team (i.e. undo the effects of $\fnt{spawn_team}$).

Finally, all threads in the team simultaneously move to the next statement, and may also freely exchange permissions within the team: for instance, a barrier may be used to separate a code segment in which a thread has exclusive {\Writable} access to a variable from one in which {\Readable} permission is shared across the whole team.


\subsection{Privatization and Reduction}
\label{sec:variables}
As described in \Cref{sec:motivationexample}, OpenMP directives can have subtle effects on the semantics of variables within the affected regions. In particular, both \kw{parallel} and \kw{for} constructs can trigger \emph{privatization} and \emph{reduction}, according to the private clause $\mathit{pc}$ and the reduction clauses $\mathit{rcs}$ included in their directives.  We model the semantics of these clauses as additional helper statements emitted at the start of each region, with arguments derived from both the syntactic clauses and the program's runtime state. 
The rules are defined in \Cref{fig:semantics-3}.

\begin{DIFnomarkup}
\begin{figure*}[ht]
{\footnotesize \centering
\vspace*{-0.5cm}
\hspace*{-1.5cm}
\begin{minipage}{1.2\textwidth}
\begin{mathpar}
    
\inferrule[Step-Priv]
    {
        \fnt{priv}_{\vec{x},i}(\lenv, m)=\tuple{\lenv', m'} \\
    }
    {
      \SPriv \> {\vec{x}}\> s \conc \lenv \conc m \XMAPSTO{i} s;\SPrivEnd \> \lenv|_{\vec{x}} \conc \lenv' \conc m'
    }

\inferrule[Step-Priv-End]
    {
        \fnt{end_priv}_{i, \lenv_o} (\lenv, m)=\tuple{\lenv', m'} \\
    }
    {
        \SPrivEnd \> {\lenv_o} \conc \lenv \conc m \XMAPSTO{i}  \SSkip \conc \lenv' \conc m'
    }

\inferrule[Step-Red]
    {
        r = \tuple{\vrc, \lenv_o} \\
        \vec{t} \triangleq \fnt{team_tid}(\ttree, i) \\
    }
    {
        \SRed\> r \conc \lenv \conc m \XMAPSTO{i, \tp, \ttree} \SSkip \conc \lenv \conc  
        \IF \fnt{leader}(\ttree, i) = i \THEN \fnt{red}_{\vec{t}, \vrc, \lenv_o, \tp}(m) \ELSE m
    }

\end{mathpar}
\end{minipage}
}
\caption{Semantic rules: \kw{private}, \kw{reduction}. Supporting operations $\fnt{priv}$, $\fnt{end_priv}$ and $\fnt{red}$ are defined in \Cref{fig:semantics-pr} in the appendix}
\label{fig:semantics-3}

\vspace{-0.5cm}

\end{figure*}
\end{DIFnomarkup}

\subsubsection{Privatization}
\label{sec:privatization}
\begin{figure}[t]
{\small

\begin{lstlisting}[style=customc]
int x = 3;
#pragma omp parallel private(x) num_threads(2)
{ // x uninitialized
    x = omp_get_thread_num();
    printf("my thread number is %d\n", x);
} // x=3
\end{lstlisting}
}
\caption{An OpenMP program with a \code{private} clause, adapted from \cite{mattsonIntro}.}
\label{fig:private-ex}

\vspace{-0.3cm}

\end{figure}

By default, variables declared before a parallel region are \emph{shared} among the threads in the region's team: they are treated as memory locations that can be accessed by all threads (and access must be synchronized accordingly). However, sometimes it is useful for each thread in the team to have its own \emph{private} copy of a shared variable. In the example in \Cref{fig:private-ex}, if \code{x} were a shared variable, the threads' accesses to it would race; the intended behavior is instead that each thread uses its own copy of \code{x}, as indicated by the \code{private} clause in the \kw{parallel} directive. Both \kw{parallel} and \kw{for} constructs can include \code{private} clauses; the iteration variable of a \kw{for} construct is also implicitly privatized as mentioned in \Cref{sec:for} (we assume that the parser has already added any implicitly privatized variables to the relevant \code{private} clauses).

In \Cref{fig:private-ex}, if there were no \code{private} clause, variable \code{x} would be shared in the team in the parallel region, and updates to \code{x} on line 5 would race. With the \code{private} clause, a private copy of \code{x} is allocated for each thread in the team, so it is safe to update \code{x} on line 5, and on line 6 each thread's \code{x} holds its own OpenMP thread number. After the parallel region, these private copies of \code{x} are deallocated, and the original \code{x} is again in scope.

The execution of a region with a \kw{private} clause is prefixed with a statement $\kw{SPriv}\ \vec{x}$, where $\vec{x}$ is the list of variables to be privatized. The rule \textsc{Step-Priv} describes the semantics of privatization. For each private variable $x_k \IN \vec{x}$, we allocate a new location $l_k$ in the shared memory $m$ which will hold the value of thread $i$'s private copy of $x_k$, and set $x_k$ to refer to $l_k$ in $i$'s local environment. The initial value at $l_k$ is undefined, as if it were a newly declared variable (with the exception that variables in reduction clauses are implicitly privatized and initialized\footnote{OpenMP has rather detailed rules for determining the initial values of reduction variables, which we do not discuss here but are reflected in our semantics.}). 
We then add an \kw{SPrivEnd} instruction to the end of the region, with an argument $\lenv_o$ containing the original memory locations of the variables in $\vec{x}$. When we reach the \kw{SPrivEnd} statement, we reverse the process, freeing each location allocated for a private variable and restoring the local environment to the original locations of $\vec{x}$. Note that each thread in a team will separately execute the \kw{SPriv} instruction, allocating its own copies of the variables in $\vec{x}$; after all threads end privatization, each $x_k$ will once again refer to the same memory location in each thread.


\subsubsection{Reduction}
\label{sec:reduction}
\begin{figure}[htb]
{\small

\vspace{-1cm}

\begin{lstlisting}[style=customc]
int sum=0;
#pragma omp parallel num_threads(2)
{
#pragma omp for reduction(+:sum)   
    for(int i=1; i<=100; i++) { sum += i; }
}
printf("sum of the first hundred numbers is %d\n", sum);
\end{lstlisting}
}
\caption{An OpenMP program with a \code{reduction} clause}
\label{fig:reduction-ex}

\vspace{-0.3cm}

\end{figure}
A reduction variable is a special kind of private variable for which, instead of returning to the original value after the region ends, we apply some operation to combine all of the private copies into a new value for the shared variable. For example, in \Cref{fig:reduction-ex}, the loop iterations are divided between two threads, each summing a part of the 100 numbers. Each thread saves its running sum in an implicitly privatized \code{sum} variable during the \kw{for} construct. When both threads are at the end of the \kw{for} construct, the reduction clause says that the private copies of \code{sum} are combined with the original one with the operator \code{+}, producing the total sum.

We assume that all reduction variables are declared to be private (the parser can add all reduction variables to the \code{private} clause of the same construct), so they are already processed by the \kw{SPriv} statement for their region. The additional work of reduction occurs at the end of the region: if the region's \code{reduction} clause is non-empty, then instead of a single barrier, \kw{SBRB} emits a barrier, a reduction statement \kw{SRed}, and another barrier\footnote{We do not support the \code{no_wait} clause that skips these implicit barriers.}. The first barrier synchronizes all threads in the team, ensuring that they have all finished the region, and transfers permissions for all their reduction variables to the leader. The \kw{SRed} statement is executed only by the leader and performs reduction on all private copies of each reduction variable, folding the specified operation over their values and storing the result in the original variable. (For all non-leader threads, \kw{SRed} is a no-op.) Finally, the last barrier forces all threads to wait until the reduction is complete before proceeding to any remaining code, and it also redistributes the team's permissions.





\subsection{Design Considerations}
There are many different ways of translating the OpenMP specification document into operational semantics. Our semantics is aimed to uncover variable and synchronization errors, and so the following considerations are especially important:
\paragraph{Barrier is the only operation that synchronizes running threads.} An operational semantics is easiest to reason with when each step is performed by one thread, and concurrency can be expressed by the interleaving of steps of individual threads. Every one of our rules except \textsc{Step-Barrier} is local in this sense: each thread may enter a \kw{for} or \kw{single} region, or execute privatization or reduction, independently of all other threads, with coordination accomplished via the asynchronous \fnt{sync\_ectx} (the first thread to enter a region sets the \fnt{ectx}, and the others read it). Only \textsc{Step-Barrier} requires that all threads in a team be at the same instruction and moves all threads forward simultaneously. Thus, all synchronization in an execution can be traced specifically to barrier operations (or to thread creation via \textsc{Step-Parallel}).
\paragraph{Operations that can race are not atomic.} If two operations can occur in either order, it is important that our semantics observe the resulting nondeterminism, which is hidden if the operations are both performed as part of a single step. For example, we could have combined \textsc{Step-End-Priv} and \textsc{Step-Red-Leader} into a single end-of-region step, but since both access privatized variables, this would obscure possible races between reduction and ending privatization. By separating them, our semantics identifies the need for a barrier between the two steps\footnote{This synchronization is not explicitly mentioned in the OpenMP specification, but its absence leads to races in programs that should be well-defined, and both GCC and Clang place a barrier before reduction.}, and could identify possible races if, e.g., we implemented the \kw{nowait} clause that removes the barrier at the end of a region. 
Another consequence of this separation is that there is no step that both synchronizes threads and accesses memory, which leads to the next point.
\paragraph{A thread must have permission to perform its memory accesses.} Steps that access memory (such as privatization and reduction) only succeed if the thread performing them has the required permissions at the start of the step. This allows the permissions to serve their intended purpose of race detection: the fact that permissions are always consistent in an execution state is enough to guarantee the absence of data races. We discuss race-freedom further in \Cref{sec:drf}.

As a side note, we are also able to support the OpenMP \kw{critical} construct by elaborating it to \kw{acquire} and \kw{release} of a dedicated lock, since lock operations are already supported by the CPM.

\section{Properties of the ClightOMP Semantics}
\label{sec:property}
The ClightOMP semantics inherit from Clight and the CPM some safety properties that are implicitly proved as part of demonstrating an execution. Moreover, demonstrating an execution also guarantees that the execution is free of data races.

\subsection{Invariants and Safety Properties}

First, Clight's semantics enforces memory safety:
\begin{theorem}[from CompCert] If a thread uses the value of a variable, then that variable has a defined value. If a thread step accesses memory, then that memory access is valid.
\end{theorem}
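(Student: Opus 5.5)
The statement is inherited from CompCert, so the plan is to reduce it to properties of CompCert's sequential semantics and then check that every ClightOMP-specific rule preserves them. The argument is by case analysis on the rule used to derive a step $\langle i\cdot\mho,\tp,\ttree,m\rangle \rightarrow \langle \mho,\tp',\ttree',m'\rangle$.

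\textbf{Sequential steps.} For \textsc{Step-Clight}, the step is by definition a Clight step $\tuple{\sigma_i, m|_{\pi_i}} \rightarrow_{\textit{sq}} \tuple{\sigma'_i,m'}$. Clight's evaluation relations for expressions and loads are partial. Evaluating a variable requires a binding in $\lenv$ and a successful $\fnt{load}$, which needs at least \Readable\ permission on a valid block. Storing requires \Writable, and freeing requires \Freeable. Undefined values cannot be used in operations that require defined operands (conditions, arithmetic producing defined results, dereferences); such evaluations simply have no derivation, so the thread is stuck. Hence any step that exists was justified by valid accesses in $m|_{\pi_i}$. The one point to check is that restricting the permission map to $\pi_i$ only removes derivations and never adds any. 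This holds because a $\fnt{load}$ or $\fnt{store}$ in $m|_{\pi_i}$ succeeds only if $\pi_i$ grants the access.

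\textbf{OpenMP rules.} For the remaining rules, I will go through them one by one. \textsc{Step-Parallel}, \textsc{Step-Barrier}, \textsc{Step-For} and \textsc{Step-Single} perform no memory reads or writes; they only redistribute permissions and modify $\tp$ and $\ttree$. The \fnt{partition} computation in \textsc{Step-For} evaluates loop bounds; if it reads variables, I will argue it does so through Clight's expression evaluation in $m|_{\pi_i}$, which reduces to the previous case. \textsc{Step-Priv}, \textsc{Step-Priv-End} and \textsc{Step-Red} go through \textsc{Step-Thread}, so they act on $m|_{\pi_i}$. For these I will unfold \fnt{priv}, \fnt{end\_priv} and \fnt{red} (\Cref{fig:semantics-pr}) and show that each is built from CompCert's $\fnt{alloc}$, $\fnt{free}$, $\fnt{load}$ and $\fnt{store}$. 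Each of these primitives is defined only when the access is valid, which again follows from CompCert's memory model lemmas. For \fnt{red}, the leader loads the other threads' private copies. This is valid because the first barrier of \SBRB\ transferred those permissions to the leader, and if it did not, the load fails and the step does not exist.

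\textbf{Main obstacle.} The main difficulty is the first half of the statement, that a used variable has a defined value. CompCert allows copying an undefined value (e.g.\ $\kw{Vundef}$ stored into another variable), so ``uses'' must be read as a use that is checked for definedness, such as a branch condition, a dereference, or an I/O argument. I would state this precisely and then inherit it from the corresponding CompCert lemmas. The fresh private copies created by \fnt{priv} are initialized to undefined, so I also need to check that no OpenMP rule consumes such a value in a checked position. Only \fnt{red} combines values, and it does so with Clight's binary operation semantics, which fail on undefined operands. Therefore stuckness, rather than misbehavior, is the outcome, and the property holds.
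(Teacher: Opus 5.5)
Your proposal is correct and follows essentially the paper's approach. The paper gives no separate proof: it inherits the property from the partiality of CompCert's Clight semantics, together with its stated design principle that every OpenMP step runs on $m|_{\pi_i}$ and succeeds only when the thread holds the required permissions, and your rule-by-rule check makes this explicit. One caveat: the appendix writes the reduction function with the raw lookup $m(\cdot)$, defined as $m.1$, so your claim that reduction uses permission-checked loads and stores rests on that stated principle rather than on the figure as literally written.
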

\noindent So proving that a program (even a sequential one) can execute under our semantics implies that threads never use uninitialized variables, access unallocated or deallocated memory, access arrays out of bounds, etc.

Second, the CPM enforces \emph{permission coherence} before each step: each thread's permissions must be consistent with each other thread's, i.e., no two threads can have write access to the same location at the same time. 
\begin{theorem}[from CPM] If an execution from the initial state of a program reaches a state $\langle \mho, \tp,  \ttree, m\rangle$, then the permissions of threads in $\tp$ are coherent.\end{theorem}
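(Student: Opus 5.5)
We prove the statement by induction on the length of the execution. We establish the invariant that the permission maps of all threads in $\tp$ are pairwise coherent; write $\mathsf{coh}(\tp)$ for this, i.e.\ the join $\bigoplus_{j \in \tp} \pi_j$ is defined. We strengthen the invariant slightly: the joined thread permissions, together with the permissions of any lock resources inherited from the CPM, are defined, and every thread's permission is bounded by the memory's allocation status. The base case is immediate, because the initial state has a single thread $t_1$ and a trivially coherent permission map. For the inductive step we case on the rule used. The CPM rules (\textsc{Step-Clight}, \textsc{Step-Spawn}, and the lock rules) are handled by reusing the CPM's own preservation lemmas; the team tree is untouched by these rules and plays no role in coherence.

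For the new OpenMP rules we check that each one preserves the join. \textsc{Step-Parallel} explicitly requires $\pi_i = \pi_i' \oplus \bigoplus_{j} \pi_j'$, and \textsc{Step-Barrier} requires $\bigoplus_{j\in\vect}\pi_j = \bigoplus_{j\in\vect}\pi_j'$. In both cases the total join over the affected threads is unchanged, so compatibility with the untouched threads carries over by associativity and commutativity of $\oplus$. The new threads in \textsc{Step-Parallel} receive fresh IDs, so no existing permission is duplicated. Every remaining rule goes through \textsc{Step-Thread}, which replaces $\pi_i$ by $\fnt{perm}(m')$ for a step taken on $m|_{\pi_i}$. This includes \textsc{Step-For}, \textsc{Step-Single}, \textsc{Step-Priv}, \textsc{Step-Priv-End} and \textsc{Step-Red}. \textsc{Step-For} and \textsc{Step-Single} leave memory unchanged, so $\fnt{perm}(m') = \pi_i$.

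The main obstacle is the memory-modifying single-thread steps: $\fnt{priv}$ allocates, $\fnt{end\_priv}$ frees, and $\fnt{red}$ reads the other threads' private copies. For these we need a lemma analogous to the one the CPM uses for Clight steps. Such a step changes the executing thread's permissions only at locations where it already held them, or at freshly allocated blocks, and it only decreases them when freeing locations it held as \Freeable. Fresh allocations cannot conflict, because the invariant guarantees that no thread holds permission on unallocated blocks. Frees only remove permission, and a \Freeable\ permission held by $i$ already forces every other thread to hold none at that location.

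For $\fnt{red}$, the leader must read the teammates' private copies. We would argue that these accesses use permissions that the preceding barrier in $\SBRB$ transferred to the leader, so they lie inside $\pi_i$. This is where we must lean on the design principle that a step succeeds only if the executing thread already holds the required permissions. We would therefore prove that $\fnt{red}$, evaluated on $m|_{\pi_i}$, fails unless those permissions are present, and that it does not enlarge $\pi_i$. With these per-operation lemmas in hand, the inductive step closes and the theorem follows.
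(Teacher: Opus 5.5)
Your proof is correct and is essentially the paper's argument: the paper takes coherence as inherited from the CPM and observes that the only new steps that move permissions between threads, \textsc{Step-Parallel} and \textsc{Step-Barrier}, are constrained by join equations, which is the core of your case analysis (with \textsc{Step-Priv} and \textsc{Step-Priv-End} reducing to CPM-style alloc/free reasoning). One simplification: for \textsc{Step-Red} you only need that loads and stores never change permission maps, so whether the reduction actually checks the leader's permissions matters for race freedom, not for coherence; the appendix even writes it with the raw lookup $m(l)=m.1(l)$, so your planned ``fails without permission'' lemma is unnecessary here.
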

\noindent This is most relevant to our semantics when we perform the \fnt{sync} operation: we may redistribute permissions freely within the team at team creation and at barriers, but the new distribution must always be coherent. This is useful for proving race freedom, as we will see in the next section.

As a consequence of these properties, demonstrating an execution of a program under our semantics implies the absence of several categories of common bugs. Variable initialization, memory errors, and data races are responsible for many real-world errors in OpenMP programs~\cite{openmp-diff-test}, so even without proving specific properties of a program (e.g., functional correctness), showing that it executes according to ClightOMP makes it much more likely to be correct.

\subsection{Race-Freedom}
\label{sec:drf}
The CPM's permissions are intended to guarantee race-freedom of any execution under its semantics, and its authors prove that at the x86-TSO level, successful executions contain no data races~\cite{cpm}. Our semantics enjoys a similar property at the C level, using the fact that every thread only performs memory operations that are allowed by its current permissions:

\begin{theorem}[Data race freedom] If an execution contains two conflicting events $e_1$ and $e_2$ on the same memory location, then $e_1$ and $e_2$ are synchronized.
\end{theorem}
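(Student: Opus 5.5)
To prove the statement, I would first make precise the notions of events, conflict and synchronization, and then argue by contradiction using permission coherence (the CPM theorem above).

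\textbf{Setting up the definitions.} I would instrument the step relation so that each step of thread $i$ emits a list of events. Memory events $\MemEv$ ($\Alloc$, $\Free$, $\Read$, $\Write$ on a location $\ell$) come from \textsc{Step-Clight} and from the memory-touching OpenMP steps (\StepPriv, \StepPrivEnd, \textsc{Step-Red}). Synchronization events $\SyncEv$ come from \StepPar{} (a $\Par$ event shared by the parent and the spawned threads), from \StepBar{} (a $\BarEv$ event shared by the whole team $\TeamTIDS(\ttree,i)$), and from the CPM lock operations.

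Two events conflict if they are on the same location, come from different threads, and at least one of them is a write, alloc or free. Events $e_1,e_2$ are \emph{synchronized} if $e_1$ happens-before $e_2$ (or conversely). Happens-before is the transitive closure of two relations: program order within a thread, and the order from a synchronization event to every later event of any thread participating in it.

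\textbf{Key invariant.} Every memory event performed by thread $i$ at a step is permitted by $\pi_i$ at the start of that step, because each rule executes on $m|_{\pi_i}$. The design consideration that no step both synchronizes and accesses memory is what makes this hold. I would prove this first, by inspection of each rule, including the supporting operations $\fnt{priv}$, $\fnt{end\_priv}$ and $\fnt{red}$.

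I would then prove a stronger claim about permissions between synchronizations. Suppose thread $i$ holds permission $p$ on $\ell$ at some step, and no synchronization step involving $i$ occurs between that point and a later state. Then $i$ still holds at least the residue of $p$ in that later state. The reason is that only $i$'s own steps, \StepPar{} and \StepBar{} can change $\pi_i$, and a thread's own Clight or OpenMP steps only change its permissions on locations it allocates or frees.

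\textbf{Main argument.} Take conflicting $e_1$ (by thread $i$, at step $n_1$) and $e_2$ (by thread $j$, at step $n_2$), with $n_1<n_2$, and suppose they are not synchronized. I would show that the interval $(n_1,n_2)$ contains no synchronization chain from $i$ to $j$. I would then transport $i$'s permission forward: $e_1$ required at least \Readable, or \Writable/\Freeable, on $\ell$. Since $i$ participates in no synchronization with $j$'s causal past, I would argue by induction on the steps in the interval that $i$ retains a permission on $\ell$ incompatible with the one $j$ needs for $e_2$.

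Barriers among other teams, or barriers that exclude either $i$ or $j$, redistribute only their participants' permissions. The difficult case is a barrier or parallel step that involves $i$ but not $j$, or $j$ but not $i$. There I would use the fact that the permission sum $\bigoplus$ over participants is preserved, so the conflicting permission stays inside the participant set, and $j$ cannot obtain it without itself synchronizing with some holder. Making this precise requires strengthening the induction to track the set $S$ of threads that are happens-after $e_1$. The invariant is that the total permission on $\ell$ held outside $S$ is compatible only with reads if $e_1$ was a write, and excludes writes if $e_1$ was a read. At $n_2$ we have $j\notin S$, so $j$'s permission is incompatible with $e_2$. By the key invariant $j$ must nevertheless hold a permission allowing $e_2$, which contradicts permission coherence (the CPM theorem).

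\textbf{Main obstacle.} I expect the main obstacle to be formulating and maintaining this set-$S$ invariant. The delicate points are newly spawned threads (which enter $S$ if their parent is in $S$), $\Alloc$/$\Free$ of fresh locations during privatization, and the fact that a thread belongs to multiple team-tree nodes.
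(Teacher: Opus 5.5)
Your set $S$ of threads that are happens-after $e_1$ is exactly the set of threads that $e_1$ ``synchronizes with'' in the paper's sense, and your two-part invariant repackages the paper's two lemmas. The read half (no \Writable\ permission on $\ell$ outside $S$) is what the paper gets from the Read-synchronization lemma plus coherence: some thread in $S$ keeps \Readable, so any writer must lie in $S$. The write half should be the Write-synchronization lemma. The paper argues directly, by induction with $e_2$ emitted in the last step, rather than by contradiction. Its maintenance argument is the same as yours: participants of a \Par\ or \BarEv\ step preserve their permission sum, and all of them enter $S$ once one of them is in it. So the overall route matches the paper's.

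There is one concrete error to fix. The write half of your invariant, as stated (``the permission held outside $S$ is compatible only with reads if $e_1$ was a write''), is too weak. It still allows a thread $j\notin S$ to hold \Readable\ on $\ell$ after a write $e_1$. When $e_2$ is a \Read, your concluding step ``$j\notin S$, so $j$'s permission is incompatible with $e_2$'' then does not follow, and the write--read race is not excluded. The invariant must say that no thread outside $S$ holds even \Readable\ on $\ell$. This holds at $e_1$ by coherence: a writer or freer holds at least \Writable, so nobody else is \Readable, and an allocator is the sole holder of a fresh location. It is preserved by the same case analysis you sketch.

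With the invariants stated correctly, the contradiction at $n_2$ is with the invariant itself. Coherence is used up when you establish the invariants at $e_1$. For the read half you also need that CompCert never reallocates a freed location, so no thread outside $S$ can obtain \Writable\ on $\ell$ by allocating it.

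Finally, if you keep CPM lock operations as synchronization events, permissions deposited in lock resources must be counted as outside $S$ unless the lock was last released by a thread in $S$. The paper avoids this because its only synchronization events are \Par\ and \BarEv.
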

\noindent The proof is in \Cref{sec:drf-proof}. Thus, any successfully terminating execution in our semantics has not encountered undefined behavior. This also justifies our use of a sequentially consistent memory model, since OpenMP specifies that race-free programs without weak-memory \kw{atomic} directives (which we do not support) have sequentially consistent behavior.

%% file: comparison.tex
\section{Related Work}\label{sec:comparison}

\myparagraph{C Semantics and Concurrency}
Our work builds on the Concurrent Permission Machine~\cite{cpm}, which is part of an effort to modularly lift CompCert's semantics and correctness proof to concurrency. Other approaches to adding concurrency to CompCert include CompCertTSO~\cite{compCertTSO}, which directly extends CompCert with the TSO memory model, and CASCompCert~\cite{cascompcert}, which is similar to the CPM but works with memory footprints instead of permissions. CASCompCert's semantics are simpler than the CPM's and do not involve as much explicit permission management, but it requires that concurrency primitives be implementable as atomically executed blocks of ordinary C commands, which does not permit the team management involved in OpenMP directives.

Two other major formalizations of C's semantics are KC~\cite{ellison-rosu-2012-popl} and Cerberus~\cite{10.1145/2980983.2908081,Memarian:2019:ECS:3302515.3290380}. The former does not include concurrency at all, but the latter does, and could also be extended with OpenMP support. While CompCert/the CPM has the advantage of connecting to a verified compiler, Cerberus has an associated model checking tool that is useful for detecting unusual concurrent behaviors.

\myparagraph{OpenMP Semantics and Verification}
Atzeni and Gopalakrishnan~\cite{atzeni} previously gave semantics to OpenMP for the purposes of defining a sound race detector. They also organize the thread hierarchy in a tree-like data structure similar to our team tree, with a different mechanism, \emph{offset-span label}, to describe specific positions within the structure. However, their work is focused entirely on race detection: it elides the semantics of actual source programs, and models program execution as a stream of memory operations and concurrency events (e.g., \kw{parallel} and \kw{barrier} directives). Thus, there are several classes of OpenMP errors that are invisible to their semantics, including malformed control flow/region nesting and reading from uninitialized private variables. In fact, even the \kw{for} construct is not explicitly modeled in their semantics, since its effects are entirely at the level of variables and control flow.

The CIVL concurrency verifer~\cite{civl} supports verification of OpenMP programs, as well as other concurrency frameworks such as MPI and CUDA. It does so by translating OpenMP directives into concurrency primitives in CIVL-C, a custom C extension with built-in concurrency operations. While CIVL-C has formal semantics and an associated verifier via SMT solvers, the translation of OpenMP to CIVL-C is part of CIVL's trusted computing base. Our work provides a means to verify this translation by showing that the CIVL-C implementation of OpenMP is consistent with its source-level semantics.

\ignore{
Our work is closely related to prior formal semantics of C and to C-like concurrent memory models.

\myparagraph{Formalizing C}
A number of prior works have formalized the semantics of C,
including CompCert~\cite{Blazy2009,leroy:hal-00703441},
KC \cite{ellison-rosu-2012-popl}, \citet{Kang:2015:FCM:2813885.2738005},
and \citet{10.1145/2980983.2908081,Memarian:2019:ECS:3302515.3290380}. 
Li \textit{et al.} proposed a formalism for Checked-C~\cite{checkedc}, a memory-safe dialect of C. \wm{is this relevant?}

\myparagraph{OpenMP and Related Memory Models}
As mentioned in \Cref{sec:intros}, there have been several previous attempts to model OpenMP concurrency \cite{atzeni,Gazi2024}. 
For the other concurrency model, Lamport probably was the first to define a memory model weaker than sequential consistency for multi-threaded programs \cite{Lamport:1979:MMC:1311099.1311750}. Adve and Hill \cite{Adve:1990:WON:325096.325100} started defining weak memory orders for memory operations.
Focusing just on hardware models: Ahamad \textsf{et al.} \cite{Ahamad1995} axiomatized causal memory and proved some important theorems. Higham \textsf{et al.} \cite{Higham98weakmemory} formalized SPARC and a number of simpler memory models in both axiomatic and operational styles. Sevcik \textsf{et al.} created a formal verification framework for a small C-like language \cite{relaxed-memory}. The same group \cite{compCertTSO} later developed the CompCertTSO to verify a compiler from CLight to X86 based on a relaxed memory model. 
Lochbihler verified a whole-program compiler for multi-threaded Java \cite{Lochbihler2018}. Sevcik \textsf{et al.} built \mbox{CompCertTSO} \cite{compCertTSO}, which adapted CompCert’s correctness proofs to x86TSO in order to consider the compilation of racy C code. 

Alglave \textsf{et al.} \cite{Alglave:2014:HCM:2633904.2627752} specified in great detail how to use a candidate execution model to define relaxed memory models and provided several verification tools. The C11 memory model was designed by the C++ standards committee based on a paper by Boehm and Adve \cite{c++memoryarticle}. Batty \textsf{et al.} formalized the C11 model with some improvements and proved the soundness of its compilation to X86-TSO \cite{Batty:2011:MCC:1925844.1926394}. A number of papers \cite{DBLP:journals/tinytocs/DoddsBG13,Vafeiadis:2013:RSL:2544173.2509532,Norris1,Boehm:2014:OGA:2618128.2618134} found that Batty \textsf{et al.}'s model enabled OOTA behaviors. Vafeiadis \textsf{et al.} \cite{CompilerOptimisations} found many other problems in Batty \textsf{et al.}'s model and proposed fixes. In 2016, 
 Batty \textsf{et al.} proposed a more concise model for $\texttt{sc}$ atomics \cite{Batty:2016:OSA:2914770.2837637}, but the model is stronger than C11; and the $\texttt{sc}$ fences there are too weak. Much previous work \cite{Vafeiadis:2013:RSL:2544173.2509532,Meshman:2015:PSS:2893529.2893552,Lahav:2015:ORW:2958675.2958702,Tamingreleaseacquire} focused on a fragment of C++ concurrency. In 2017, Lahav \textsf{et al.} \cite{Lahav:article} defined a comprehensive C++ model (RC11) based on all previous models, with extra fixes on Batty \textsf{et al.}'s model.  Many previous papers \cite{ConcurrencySemanticsRelaxed,Jeffrey:2016:TAR:2933575.2934536,Relaxedmemory} also proposed solutions for OOTA problems. \wm{I'm not convinced that any of this is at all relevant to us. We're not talking about weak memory in this paper at all. We should compare to OpenMP semantics, operational concurrent C models (e.g. CASCompCert), and maybe concurrent semantics for other languages.}
}

%% file: conclusion.tex
\section{Conclusion and Future Work}
\label{sec:conclusion}

In this work, we have presented ClightOMP, a formal semantics for C programs with OpenMP directives. Our semantics extends the concurrent C semantics of the Concurrent Permission Machine with a \emph{team tree} that tracks the states of threads spawned via OpenMP, and carefully accounts for the effects of privatization and reduction on local variables. This gives us a formal definition of the allowed behaviors of OpenMP C programs according to the OpenMP specification, and enables us to identify subtle, undesirable behaviors induced by incorrect annotations; in particular, we have shown that any execution in our semantics is data-race-free. Because most of the OpenMP logic is captured by changes to the team tree, our semantics are also fairly modular and could be reused, with minimal modification, to define OpenMP semantics for other languages (e.g., C++ or Fortran).

We aim to further validate the consistency of ClightOMP semantics with the behaviors of actual OpenMP programs in two ways. First, we can test the semantics by proving that some executions of OpenMP programs have the same results in our semantics as their real-world implementations; second, we can implement a verified compiler of ClightOMP programs and test the compiled programs. 

We envision two main applications for this semantics: \emph{testing} and \emph{verification}. On the testing side, we aim to develop an \emph{executable} version of our semantics that could serve as a reference interpreter for OpenMP programs. An executable semantics would allow us to test our semantics against actual OpenMP implementations, potentially uncovering bugs in either our semantics or the implementations. It could also serve as a basis for property-based testing (PBT), automatically verifying the correctness properties of OpenMP programs in a sound manner grounded in our semantics. The PBT framework would be useful for detecting race conditions and runtime errors in OpenMP programs.

On the verification side, tools that automatically parallelize sequential programs by instrumentation with OpenMP pragmas \cite{tehrani2024coderosetta,mahmud2025autoparllm,mahmud2025contraph} lack a semantic preservation guarantee. To aid this, we can prove that a program is correctly instrumented with OpenMP directives by proving that its allowed behaviors under our semantics are a subset of those of the original sequential program (i.e., refinement). We are also interested in using our semantics to verify a \emph{compiler} that implements OpenMP directives with standard concurrency primitives: the CPM already has a (mostly) proved-correct compiler to assembly, so by translating OpenMP directives into CPM spawn and lock operations, we can obtain a verified compiler for C+OpenMP. In the longer run, we note that the CPM was developed as backing for the Verified Sofware Toolchain (VST)~\cite{vst}, a tool for verifying functional correctness of C programs; our semantics could serve as the basis for an extension of VST to verify OpenMP programs as well, including those with more complex behavior that may not exactly match the behavior of sequential programs (at least locally).



%% file: syntax.tex
\begin{figure}
  \small \centering
  $\begin{array}{l}
\begin{array}{lll}
  \text{Identifier:}~ x
& \text{Integers:}~n::=\mathbb{Z} 
& \text{Booleans:}~b::=\TT \mid \FF
\end{array}
\\[0.5em]

\begin{array}{llllllll}
 \mathit{rid} & ::= & \texttt{+} \mid \texttt{*} \mid \& \mid \texttt{|} \mid \mathtt{\sim} & \text{reduction\_identifier}  \\[0.5em]
                             & \mid & \texttt{\&\&} \mid \texttt{||} \mid \texttt{max} \mid \texttt{min} \\[0.5em]
nc & ::= & n & \text{num\_thread clause} \\[0.5em]
pc & ::= &  x^* & \text{privatization clause} & \verb|private(x*)| \\ [0.5em]
rc & ::= & (\mathit{rid}, x^*) & \text{reduction clause} & \verb|reduction(rid:x*)| \\[0.5em]

r  & ::= & (rc^*, \lenv) & \text{data for reduction} \\[0.5em]
\mathit{s} & ::=   & \mathit{cs} & \text{Clight Stmt}  & \\[0.5em]
& \mid & \kw{SPar}_{n} \> nc \> pc \> rc^* \> s & \text{parallel construct} & \verb|#pragma omp parallel| \\[0.5em]
& \mid & \kw{SFor}_{n} \> pc \> rc^* \> s & \text{for construct} & \verb|#pragma omp for| \\ [0.5em]
& \mid & \kw{SSingle}_{n} \> pc \> s & \text{single construct} & \verb|#pragma omp single| \\ [0.5em]
& \mid & \kw{SBarrier}_{n, b, s} & \text{barrier construct} & \verb|#pragma omp barrier| \\ [0.5em]
& \mid & \kw{SPriv} \>\mathit{pc} \>s  & \text{start a privatization scope} \\ [0.5em]
& \mid & \kw{SPrivEnd} \> \lenv  & \text{end of a privatization scope} \\ [0.5em]
& \mid & \kw{SRed} \> r  & \text{reduction} \\ [0.5em]
\end{array}
    \end{array}
  $
  \caption{\lang Syntax. \kw{SPriv}, \kw{SPrivEnd}, \kw{SRed} are generated at runtime, and \ensuremath{\lenv} stores runtime values. }
  \label{fig:ClightO-syn}
\end{figure}

%% file: nested_par_region.tex
\Cref{fig:nested-parallel-region-diagram} depicts the team tree states in a possible execution of the program from \Cref{fig:program-nested-par-region}.
The program starts with a single thread $\mathit{t1}$, just like a sequential C program, and the corresponding initial team tree, as depicted in \Cref{fig:nested-parallel-region-diagram}(a). The subscript $i$ of an OpenMP thread state denotes the thread ID, and the superscript indicates the level of nesting in a parallel region, or \textit{init} to specify a freshly initialized node. 

When thread $\mathit{t1}$ reaches the first \kw{parallel} construct on line 2, it starts a new team that executes the code associated with this construct (lines 3--9).

As depicted in \Cref{fig:nested-parallel-region-diagram}(b), the \kw{parallel} construct on line 2 starts a new parallel region, triggering \fnt{spawn_team} that adds a new team of two nodes of thread IDs $\mathit{t1}, \mathit{t2}$ (along with some team context that is not shown here). The new node $s_{t1}^2$ supersedes the original node $s_{t1}^1$ for thread $\mathit{t1}$, and starts with its own initial (empty) stack of OpenMP contexts, representing the fact that ``$\mathit{t1}$ as part of the team $\{\mathit{t1}, \mathit{t2}\}$'' is a different logical entity from ``$\mathit{t1}$ as the initial thread''. When the team ends at line 9, although $\mathit{t2}$ will terminate and $\mathit{t1}$ will continue to execute,
both nodes in the team are removed from the tree by \fnt{despawn_team}; $\mathit{t1}$ will return to OpenMP state $s_{t1}^{1}$, its state from before the \kw{parallel} construct (\Cref{fig:nested-parallel-region-diagram}(d)).

Nested \kw{parallel} constructs create nested parallel regions. As depicted in \Cref{fig:nested-parallel-region-diagram}(c), when $\mathit{t1}$ of the team $\{\mathit{t1}, \mathit{t2}\}$ meets the parallel construct on line 4 in \Cref{fig:program-nested-par-region}, it spawns a new team with the same rules, while $\mathit{t2}$ has not reached line 4 yet. When $\mathit{t2}$ reaches line 4, it likewise spawns a new team $\{\mathit{t2}, \mathit{t5}, \mathit{t6}\}$ for parallel region III for some new threads $\mathit{t5}$, $\mathit{t6}$, and the execution of this team does not interfere with the other team $\{\mathit{t1}, \mathit{t3}, \mathit{t4}\}$.

\begin{figure}[h]

\begin{tikzpicture}

{[on background layer]
\draw[rounded corners, dotted, blue, dash pattern=on 2pt off 2pt] 
    (0.5, -0.1) 
    rectangle 
    (12.8, -4.2);

\draw[rounded corners, dotted, blue, dash pattern=on 2pt off 2pt] 
    (0.7, -0.6) 
    rectangle 
    (12.6, -3.7);

\draw[rounded corners, dotted, blue, dash pattern=on 2pt off 2pt] 
    (0.9, -1.4) 
    rectangle 
    (12.4, -3.3);

}

\node[anchor=north west, inner sep=0] (codebox) at (0,0) {%
    \begin{minipage}{1.2\linewidth}
{\small
{\captionsetup[lstlisting]{margin = 4 mm}
  \begin{lstlisting}[xleftmargin= 4 mm]
    // program starts with one thread t1 (Parallel Region I)
    #pragma omp parallel num_threads(2) // (Parallel Region II)
    { // a team of 2 threads {t1, t2} is created
    #pragma omp parallel num_threads(3) //(Parallel Region III)
        { // t1 creates a team of {t1, t3, t4}, t2 creates {t2, t5, t6}
            do_something
        } // the teams {t1, t3, t4},  {t2, t5, t6} ends here, the primary 
          //  thread of each team t1 and t2 resumes to parallel region II  
    } // the team {t1,t2} ends
    // t1 resumes in the outmost parallel region I
  \end{lstlisting}
}
}

 \end{minipage}
};

\end{tikzpicture}

\caption{An OpenMP program with nested parallel regions I, II and III, marked with dotted squares.  }
\label{fig:program-nested-par-region}
\end{figure}

\begin{figure}[htb]
  \includegraphics[width=1.06\textwidth]{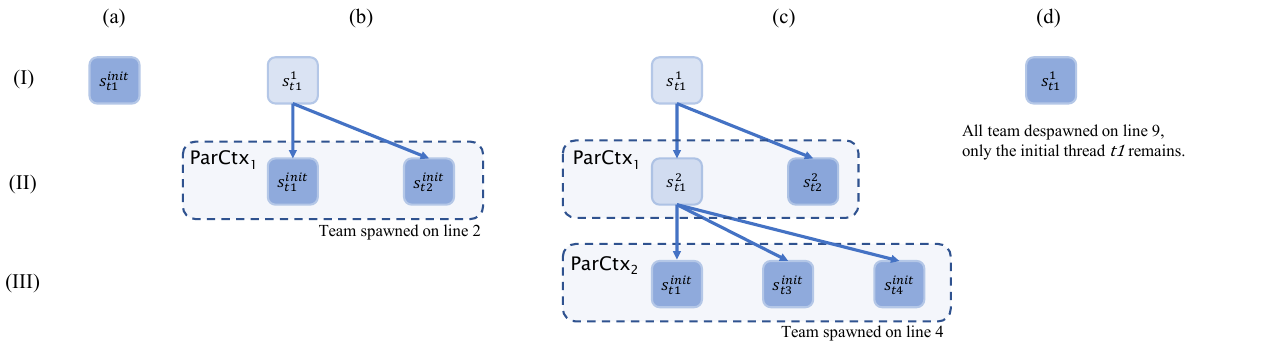}
\vspace*{-0.5em}
  \caption{Team tree \ttree evolution for an execution of the program in \Cref{fig:program-nested-par-region}. \textcolor{spec}{Blue} nodes are active; \textcolor{auto}{grayed out} ones are inactive. (I), (II), and (III) refer to the regions in \Cref{fig:program-nested-par-region}. Each team also has an associated parallel context. }
\label{fig:nested-parallel-region-diagram}
\end{figure}

%% file: ttree_op.tex
\vspace{-1cm}
\begin{DIFnomarkup}
\begin{figure}[H]
{
Node operations of the form $\fnt{op}\triangleq \tuple{r, f}$ is a pair of two functions $r, f:\mathit{Node} \rightarrow \mathit{Node}$, in which $f$ updates a node, and $r$ finds the node that needs to be updated with respect to the input node: $\id$ is the identity and $\fnt{parent}$ finds the parent node. $n$ in the $\mathrm{when}$ clause binds to the node before applying $\fnt{op}$, and $\fnt{op}$ is defined only if the $\mathrm{when}$ clause holds.
\[
    \begin{array}{l@{\;}c@{\;}c@{\;}l}
    \fnt{spawn\_team}(\pctx, \vect) &\triangleq\ & \Tuple{\id, &
        \lambda n, n.\fnt{tm} ::=  \tuple{\pctx, \bot, \fnt{new_nodes}(\vect)} \quad \WHEN\; n.\fnt{tm} = \bot } \\
    \fnt{despawn\_team}(\idx) &\triangleq\ & \Tuple{\fnt{parent}, &
        \lambda n, n.\fnt{tm} ::= \bot \quad \WHEN\; n.\fnt{ectx}=\bot\> \land \\ &&&
           n.\fnt{pctx}=\ParCtx_{\idx} \land
           \forall\ n' \IN n.\fnt{mates},\> n'.\fnt{tm}=\bot }  \\
    \fnt{sync_ectx}(\ectx) &\triangleq\ & \Tuple{\fnt{parent}, &
        \lambda n, n.\fnt{ectx} ::= \ectx \quad \WHEN\; n.\fnt{ectx} = \bot \lor n.\fnt{ectx} = \ectx } \\
    \fnt{pop_ectx}() &\triangleq\ & \Tuple{\fnt{parent}, &
        \lambda n, n.\fnt{ectx} ::= \bot \quad \WHEN \;
        n.\fnt{ectx} \neq \bot } \\
    \fnt{end_ctx}(\idx, s) & \triangleq && \IF s=\SFor \lor s=\SSin \THEN \fnt{pop_ectx} \ELSE \fnt{despawn_team}(\idx)
    \end{array}
\]
Tree operations $\fnt{op}(\ttree,i)$ on a tree $T$ and a thread ID $i$, lifted from node operations with the same name. $\fnt{op}$ to the right of $\triangleq$ binds to the node operations:
\[
\begin{array}{l@{\;}l}
\op(T, i) & \triangleq \op.1(\ttree(i)) \upd \op.2 (\op.1(\ttree(i))) \\
& \WHERE \op \IN \{\fnt{spawn\_team}(\pctx, \vect), \fnt{despawn\_team}(\idx), 
\fnt{sync_ectx}(\ectx), \\
&\hspace{1.8cm} \fnt{pop_ectx}(), \fnt{end_ctx}(\idx, s) \} \\
\fnt{leader}(\ttree, i) & \triangleq \text{Thread ID of the $leader$ of the team of $i$} \\
\fnt{team_tids}(\ttree, i) & \triangleq \text{Thread IDs of all teammates in the team of $i$}
\end{array}
\]

}
\caption{Team tree operations $\op(T, i)$, $\fnt{leader}(\ttree, i)$, $\fnt{team_tids}(\ttree, i)$ and the node operations that $\op(T, i)$ is lifted from. }
\label{fig:team-ops}
\end{figure}
\end{DIFnomarkup}

%% file: priv_red_supporting_functions.tex
We first define some notations:
\begin{align*}
\fullmoon_{j}^{[t_0; t_1; \dots; t_k]}\ f_{j} \triangleq & f_{t_k} \circ \dots \circ f_{t_1} \circ f_{t_0} \mathtt{\ where\ } \forall j \in [t_0; t_1; \dots; t_k], f_{j} :  T \rightarrow T \\
d\Bracket{ [k_0 ; \dots ; k_n ] \upd [v_0 ; \dots ; v_n ] } \triangleq & d[k_0 \upd v_0 ] \dots [k_n \upd v_n ] \\
f(\vec{t}) \triangleq & \fnt{map}\ f\ \vec{t} \mathtt{\ where\ } f :  T \rightarrow T, \vec{t} : \mathtt{List}\ T \\
\end{align*}

The privatization operation $\fnt{priv}$ takes the local variable environment and memory right before privatization, allocates private copies of $\vec{x}$ in $m$, and updates $\lenv$. The end of privatization operation $\fnt{end_priv}$ frees variable names in $le_o$, and restores local variable environment for these privatized variables. The reduction operation $\fnt{red}$ gathers contributions in threads $\vec{t}$, adds them to the original variable (by looking up the original local variable environment $\lenv_o$ before privatization).
\vspace{-0.8cm}
\begin{figure}[H]
\begin{DIFnomarkup}
\begin{center}
{\small

\vspace{-2em}

\begingroup
\begin{align*}
    \fnt{priv}_{\vec{x},i}(\lenv, m) \triangleq \ 
    & \letin{ \tuple{m', \vec{l}} }{ m.\fnt{alloc}(\fnt{ctypes}_{\lenv}(\vec{x})) } \\
    & \tuple{\lenv[\vec{x} \upd \vec{l}], m'} \\[5pt]
    \fnt{end_priv}_{i, \lenv_o}(\lenv, m)  \triangleq\ 
    & \fnt{foldr} \big(\lambda \ x.\ \\
    & \hspace{1em}  \letin{ m' } { m.\fnt{free}(\lenv(x)) } \\
    & \hspace{1em} \tuple{\lenv[x \upd \lenv_o(x)],\ m'} \\
    & \hspace{0.2em} \big) \ \ \tuple{le, m}\ \fnt{dom}(\lenv_o) \\  
    \fnt{red\_one\_var}_{\vect, \rop, x, \mathit{v_o}, \tp}(m) \triangleq  \
    & \letin{\vec{v}}{ \fnt{map} \Big(\lambda\ t.\ m\big((\tp(t).\fnt{le})(x)\big) \Big) } \\
    & \letin{v'}{\fnt{foldr}\> \big(\lambda\ a\ b. \fnt{eval_expr} (\kw{EBinOp}\ \rop\ b\ a) \big)\ \ v_o\; \vec{v}} \\
    & m[\mathit{l_o} \upd v'] \\[5pt]
    \fnt{red\_one\_clause}_{\vect, \mathit{rc}, \mathit{le_{o}}, \tp}(m) \triangleq  \
    & \letin{(\rop, \vec{x})}{\mathit{rc}} \\
    & \fullmoon_{x}^{\overrightarrow{xs}} \fnt{red\_one\_var}_{\vect, \rop, x, m(\mathit{le_{o}}(x)), \tp}(m) \\[5pt]
    \fnt{red}_{\vect, \overrightarrow{\mathit{rcs}}, \mathit{le_{o}}, \tp}(m) \triangleq\ & \fullmoon_{\mathit{rc}}^{\overrightarrow{\mathit{rcs}}}\fnt{red\_one\_clause}_{\vect, \mathit{rc}, \mathit{le_{o}}, \tp}(m) \\
\end{align*}
\endgroup






\vspace{-1cm}
}

\caption{Supporting functions for privatization and reduction. $\tp(t).\fnt{le}$\hspace{-1pt} returns the local variable environment of the thread state $\tp(t)$. $\fnt{alloc}$, $\fnt{free}$ allocates and frees a chunk of memory; $\fnt{ctypes}$ looks up the Clight types of variables; $\fnt{eval_expr}$ evaluates a Clight expression to a value. These are defined as in CompCert. $\fnt{dom}$ returns the domain of a map as a list.  }
\label{fig:semantics-pr}
\end{center}
\end{DIFnomarkup}
\end{figure}

%% file: example_execution.tex
\label{sec:eg-exec}
We demonstrate how a program executes in our semantics in \Cref{fig:full-exec}. We show the program on the left, and each line's corresponding active threads on the right. The program begins with an initial thread $t_1$ in the thread pool and an initial team tree \ttree. When $t_1$ executes the parallel pragma on line 3 with the rule \textsc{Step-Parallel}, it forks another thread $t_2$ and makes a new team $\{t_1,t_2\}$. At this point, the variable name \code{r} in both threads refers to the same variable, and they can read \code{r} concurrently. Then  $t_1$ and $t_2$ each enter the \code{for} construct individually with the rule \textsc{Step-For}: the associated \textsc{Step-Priv} privatizes the reduction variable $\code{r}$ and initializes it to 0. Each thread is assigned part of the loop iterations (the assignment is set in the {\ForCtx} when the first thread reaches the \code{for} pragma), and running these iterations updates \code{r} to be $n$ in $t_1$ and $m$ in $t_2$ by the end of the loop. The first thread reaching line 10 is then paused at the initial barrier of \kw{SBRB}, until the other thread reaches line 10 and triggers \textsc{Step-Barrier}, and $t_2$ gives its permission on the original variable \code{r} to $t_1$. Then each thread executes \kw{SRed}, which in $t_2$ does nothing and in $t_1$ combines the threads' contributions $n$, $m$ with the original value $0$, setting the original version of \code{r} to $0+n+m$. After another barrier at the end of \kw{SBRB}, the \ForCtx's lifetime ends and the threads each execute \kw{SPrivEnd}, deallocating their private copies of \code{r} and returning the name \code{r} to refer to the original version. Therefore, after they exit the \code{for} construct, both threads will execute line 11 and print the same value $n+m$. 
Eventually they reach the end of the \!\code{parallel} construct on line 13, where the \kw{SBRB} ends the team, removes the team's contexts and nodes from the team tree, and transfers all of $t_2$'s permissions to \code{r} to $t_1$ (so now $t_1$ has full permissions). Finally, the program ends with just $t_1$ in the thread pool and a node for $t_1$ in \ttree, and we conclude that the program terminates successfully.

\begin{center}

\begin{tabular}{l@{}l}
\begin{minipage}[b]{0.6\textwidth}
{
\captionsetup[lstlisting]{margin = 5 mm}
\begin{lstlisting}[style=customc, linewidth = 6cm]
int r = 0;

#pragma omp parallel num_threads(2)
{
    // r is shared variable
#pragma omp for reduction(+:r)
    for(...) {
        // r is privatized
        r+=...
    }
    printf("%d\n",r); //prints n+m twice
    ...
}

// program ends
\end{lstlisting}
}
\captionof{figure}{An OpenMP program. \vspace{7em} }
\end{minipage}
&
\begin{minipage}[b]{0.45\textwidth}
\includegraphics[width=\linewidth]{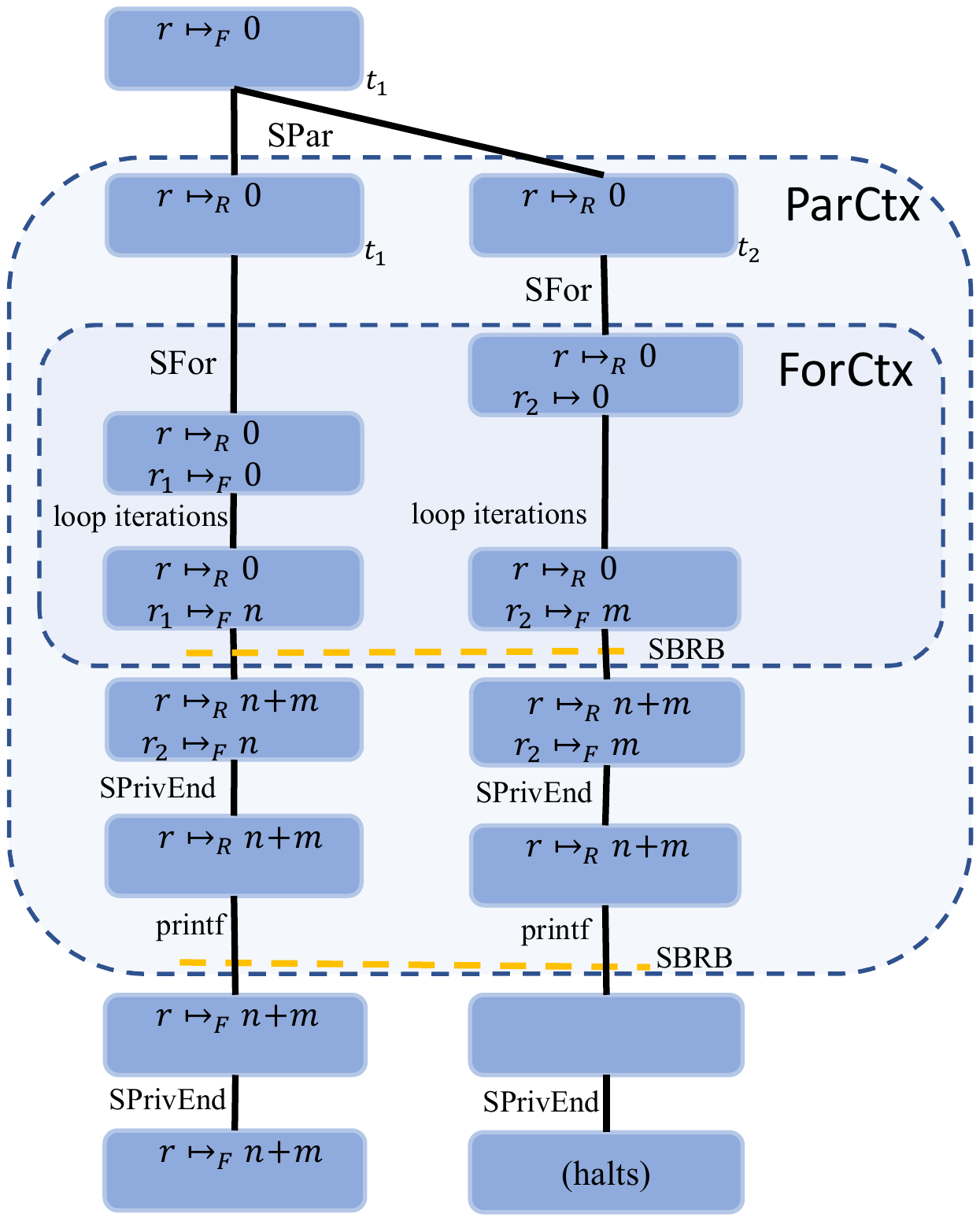}
\captionof{figure}{The execution flow and liftime of the team contexts. Each cell contains a thread's local view to the memory, as well as the permissions annotated as subscripts: $R$ for {\Readable}, and F for {\Freeable}.}
\end{minipage}

\end{tabular}
\label{fig:full-exec}

\end{center}

%% file: drf.tex
\label{sec:drf-proof}
We prove that any ClightOMP execution is race-free. We begin by defining the memory and synchronization events produced by a \lang execution.

\begin{figure}
  \small \centering
\begin{align*}
\text{(CompCert) } \MemEv \triangleq\ & \Alloc_i\ l\ |\ \Free_i\ l\ |\ \Read_i\ l\ |\ \Write_i\ l \\
\SyncEv \triangleq\ & \Par_i\ \vec{t}\ |\ \BarEv\ \vec{t} \\
\Event \triangleq\ & \MemEv\ |\ \SyncEv
\end{align*}

\[
\begin{array}{ll}
\StepClight \text{ in thread } i & \text{list of }\MemEv_i \\
\text{privatization/reduction steps in thread } i & \text{list of }{\MemEv_i}\> \\
\StepPar \text{ in thread } i \text{ spawning } \vec{t} & [\Par_i\ \vec{t}] \\
\StepBar \text{ synchronizing } \vec{t} & [\BarEv\ \vec{t}] \\
\text{other steps} &  [] \\
\end{array} 
\]
\caption{ClightOMP memory events emitted by a step.}
\label{fig:mem-ev}
\end{figure}

A ClightOMP step that accesses memory emits a list of CompCert memory events; a ClightOMP step that performs synchronization emits a synchronization event. \Cref{fig:mem-ev} shows the events emitted by each step. Per-thread Clight steps and \StepPriv /\StepPrivEnd /\StepRedL /\StepRedNL\ emit a sequence of memory events corresponding to their memory operations, each indexed by the thread number $i$ that runs the step and the accessed memory location $l$. Specifically, {\StepPriv} emits an {\Alloc} event for each memory allocation and {\StepPrivEnd} emits corresponding {\Free} events; {\StepRedL} emits {\Read} events for reading the private copies of the reduction variable, followed by a {\Write} event for updating the original copy. {\StepPar} emits an parallel event $\Par_i\ \vec{t}$ that synchronizes thread $i$ with the spawned threads $\vec{t}$.  {\StepBar} emits a barrier event that synchronizes among the team $\vec{t}$.
We define a labeled version of our semantics $\langle i \cdot \mho, \tp,  \ttree, m \rangle \xrightarrow{\vec{e}} \langle \mho, \tp',  \ttree', m' \rangle$ where each step is labeled with the events it emits; then the event trace of an execution is the concatenation of all events emitted at each step.




As is standard, we define a data race as a pair of conflicting memory events (i.e., events to the same location that are not both {\Read}s) that are not ordered via a happens-before relation~\cite{lamport-vector-clocks}. First, we need to define the happens-before relation:
\begin{defi}[happens-before] Consider a list $\vec{e}$ of events and two events $e_i, e_j \IN \vec{e}$ where $i < j$ are indexes to $\vec{e}$. Then $e_i$ happens before $e_j$, written $e_i <_{\mathrm{hb}} e_j$, if: 
\begin{itemize}
\item $e_i$ and $e_j$ are from the same thread, or
\item $e_i$ is $\Par\ \vec{t}$ and the thread of $e_j$ is in $\vec{t}$, or
\item one of $e_i$ and $e_j$ is $\BarEv\ \vec{t}$ and the thread of the other is in $\vec{t}$, or
\item there is another event $e_k$ such that $e_i <_\mathrm{hb} e_k$ and $e_k <_\mathrm{hb} e_j$.
\end{itemize}
\end{defi}

We observe that if $e_i <_\mathrm{hb} e_j$, then $e_i$ also happens before any following events by $e_j$'s thread; in fact, when two threads synchronize, all previous events by the first thread happen before all following events of the second thread. This makes it useful to talk about an event synchronizing with a thread, rather than simply with another event.

\begin{defi}[synchronizing with a thread] An event $e_i$ synchronizes with a thread $t$ in a trace $\vec{e}$ if for any event $e_t$ by $t$, if we extend $\vec{e}$ with a new trace $\vec{e'}$ including $e_t$, then $e_i <_\mathrm{hb} e_t$ in $\vec{e} +\mspace{-4mu}+\> \vec{e'}$.
\end{defi}

We can now begin to prove race-freedom. Intuitively, we can trace a permission throughout an execution: {\Writable} permission flows forward to \emph{all} threads that hold at least {\Readable} permission for the rest of the execution, and {\Readable} permission flows forward to \emph{some} thread that holds at least {\Readable} permission for the rest of the execution. We now formalize and prove this intuition in terms of event synchronization. 

\begin{lemma}[Read synchronization] \label{lemma:read-sync} Suppose we have an execution ending in a state $\mathit{st}$ whose trace contains an event $e = \Read_i\ l$, and $l$ is not deallocated in $\mathit{st}$, i.e., there is some thread that has some permission to $l$ in $\mathit{st}$. Then there is some thread $j$ such that 1) $j$ has at least {\Readable} permission to $l$ in $\mathit{st}$, and 2) $e$ synchronizes with $j$ in the trace.
\end{lemma}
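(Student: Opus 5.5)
We prove the lemma by induction on the length of the execution suffix after the step that emitted $e$. More precisely, we fix the step $\mathit{st}_0 \xrightarrow{\vec{e}_0} \mathit{st}_1$ whose label contains $e = \Read_i\ l$, and prove the following invariant for every later state $\mathit{st}'$ of the execution in which $l$ is still allocated: some thread $j$ holds at least {\Readable} permission to $l$ in $\mathit{st}'$, and $e$ synchronizes with $j$ in the trace up to $\mathit{st}'$.

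\emph{Base case.} Consider the state $\mathit{st}_1$ just after the reading step. The step that emitted $e$ is either a \StepClight\ step or a \StepRedL\ step of thread $i$. In both cases the step runs on $m|_{\pi_i}$, and by the design consideration ``a thread must have permission to perform its memory accesses'' (inherited from CompCert's load checks), $\pi_i(l) \geq \Readable$. The step cannot lower $i$'s permission on $l$ unless it also emits a $\Free_i\ l$; in that case $l$ would be deallocated, contradicting the assumption. (If $l$ is later reallocated, we should read the hypothesis as referring to the same allocation; CompCert never reuses block identifiers, so this case does not arise.) We therefore take $j = i$. Any event $e_t$ by $i$ that extends the trace comes after $e$ and is by the same thread, so $e <_{\mathrm{hb}} e_t$ by the first clause of happens-before. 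Hence $e$ synchronizes with $i$.

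\emph{Inductive step.} Suppose the invariant holds at $\mathit{st}'$ with witness $j$, and consider the next step $\mathit{st}' \to \mathit{st}''$ with $l$ still allocated in $\mathit{st}''$. We case-split on the rule used. Synchronizing with $j$ is preserved under extending the trace, since the definition quantifies over all extensions; so only $j$'s permission can fail, and we check each rule as follows.
\begin{itemize}
\item A step by a thread other than $j$ that does not redistribute permissions leaves $\pi_j$ unchanged. This covers \StepClight, \StepThread\ (for, single, privatization, reduction) and lock operations.
\item A \StepClight\ or \StepThread\ step by $j$ itself can only lower $\pi_j(l)$ through a free of $l$, which is excluded. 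We need a small lemma here: Clight steps only change permissions via alloc and free. That holds because stores and loads preserve the permission map.
\item Under \StepPar\ by $j$, we have $\pi_j = \pi_j' \oplus \bigoplus \pi_{t}'$, so some thread $t \in \{j\}\cup\vect$ still has at least {\Readable} on $l$. This uses the property of $\oplus$ that a join with {\Readable} or more has a component with {\Readable} or more. The emitted event $\Par_j\ \vect$ comes after every prior event of $j$, so $e <_{\mathrm{hb}} \Par_j\ \vect <_{\mathrm{hb}} e_t$ for later events of $t$. By transitivity, $e$ synchronizes with $t$.
\item Under \StepBar\ with team $\vect \ni j$, the sum $\bigoplus_{t\in\vect}\pi_t$ is preserved, so some $t\in\vect$ has at least {\Readable} afterwards. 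Since $e$ hb-precedes the barrier event (through $j$), which hb-precedes every later event of $t$, $e$ synchronizes with $t$. If $j \notin \vect$, then $\pi_j$ is untouched.
\end{itemize}

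The main obstacle is the \StepBar\ and \StepPar\ cases. There we need a precise algebraic fact about $\oplus$ on permission maps, namely that Readable-or-higher at $l$ in a sum forces Readable-or-higher in some summand. We also need the precise happens-before reasoning that $e$ precedes the synchronization event, which requires $e$ to be hb-before some event of $j$ occurring earlier in the trace rather than just ``synchronizes with $j$''. I would handle the latter by strengthening the invariant to say that $e$ hb-precedes every event of $j$ appearing after $e$, as well as all future ones. A secondary check is that permissions only decrease via emitted {\Free} events, which follows from how CompCert's memory operations update $\fnt{perm}(m')$.
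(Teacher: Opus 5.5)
Your proposal is correct and is essentially the paper's proof. You start with $j=i$ just after the reading step and keep that witness until a \Par\ or \BarEv\ step moves $j$'s permission. You then use preservation of the permission sum over the synchronized threads to find a new holder $k$ of \Readable\ permission, and pass synchronization through the sync event. Your treatment of a step that both reads and frees $l$, and your explicit appeal to the splitting property of $\oplus$, make precise what the paper leaves implicit.

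One caution about the strengthened invariant you propose at the end. It is not actually preserved: after the witness changes to $k$, events of $k$ lying between $e$ and the synchronizing step are generally concurrent with $e$. It is also unnecessary. If you treat $\Par_j\ \vect$ and $\BarEv\ \vect$ as events of each participating thread, as the paper's step ``$e'$ synchronized $j$ with $k$'' presupposes, then ``$e$ synchronizes with $j$'' directly yields $e <_{\mathrm{hb}} e'$.
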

\begin{proof} By induction on the execution. In the base case, the execution consists of a single step $\mathit{st}_0 \xrightarrow{\vec{e}} \mathit{st}$ where $\Read_i\ l \IN \vec{e}$. A thread only performs operations it has permissions to, and steps that perform {\Read}s do not change permissions, so thread $i$ still has {\Readable} permission to $l$ in $\mathit{st}$ and $\Read_i\ l$ synchronizes with $i$.

In the inductive case, we have an execution ending in a state $\mathit{st}$ with a thread $j$ as described, and we consider a next step $\mathit{st} \xrightarrow{e'} \mathit{st}'$. If $j$ still has {\Readable} permission to $l$ in $\mathit{st}'$, then we are finished. If it does not, then $e'$ must be emitted from a synchronization operation that changed $j$'s permissions. Whether this operation was a {\Par} or {\BarEv}, it synchronized with a set of threads $\vec{t}$ whose new permissions in $\mathit{st}'$ sum to the same result as the sum of the source permissions in $\mathit{st}$, which included $j$'s {\Readable} permission to $l$. Thus, there must be at least one thread $k \IN \vec{t}$ that has at least {\Readable} permission in $\mathit{st}'$, and since $e$ synchronized with $j$ and $e'$ synchronized $j$ with $k$, $e$ now synchronizes with $k$ in the extended trace.
\end{proof}

\begin{lemma}[Write synchronization] \label{lemma:write-sync} Suppose we have an execution ending in a state $\mathit{st}$ whose trace contains an event $e = \Write_i\ l$ (or {\Alloc} or {\Free}). Then for all threads $j$ that have at least {\Readable} permission to $l$ in $\mathit{st}$, $e$ synchronizes with $j$ in the trace.
\end{lemma}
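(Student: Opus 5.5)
We argue by induction on the length of the execution, mirroring the proof of \Cref{lemma:read-sync}, but tracking the stronger invariant that \emph{every} thread with at least {\Readable} permission to $l$ is synchronized with $e$. The key fact is that a thread emitting $\Write_i\ l$ (or $\Alloc_i\ l$, $\Free_i\ l$) must hold {\Writable} (resp.\ {\Freeable}) permission to $l$ at that step. By the permission-coherence theorem from the CPM, no other thread then holds any permission to $l$ above $\mathsf{None}$, and in particular none is {\Readable}.

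\textbf{Base case.} The execution ends with the step $\mathit{st}_0 \xrightarrow{\vec{e}} \mathit{st}$ that emits $e$. Within a single thread step, only thread $i$'s permissions change. No other thread had {\Readable} permission to $l$ before the step, by coherence, so after the step the only thread that can have at least {\Readable} permission to $l$ is $i$ itself. Since $e$ is an event of $i$, it synchronizes with $i$ by the same-thread clause of happens-before. For $\Alloc$, the location is fresh, so no other thread held permission to it. For $\Free$, $i$ may be left with no permission at all, in which case the claim holds vacuously.

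\textbf{Inductive case.} Assume the claim for an execution ending in $\mathit{st}$, and consider a next step $\mathit{st} \xrightarrow{\vec{e'}} \mathit{st}'$. Let $j$ have at least {\Readable} permission to $l$ in $\mathit{st}'$.
\begin{itemize}
\item If $j$ already had at least {\Readable} permission in $\mathit{st}$, the induction hypothesis gives the result, because synchronization with a thread is preserved when the trace is extended.
\item Otherwise $j$ gained permission in this step. A \textsc{Step-Clight} or \textsc{Step-Thread} step can only gain permission by a fresh allocation. That case gives a new location, or else $l$ was freed and re-allocated, which we handle by restricting attention to the current allocation (see below).
\item Permission can otherwise only be transferred by \textsc{Step-Parallel} or \textsc{Step-Barrier}, which emit $\Par_k\ \vec{t}$ or $\BarEv\ \vec{t}$. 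In both, the total permission of the participating threads is conserved. Hence $j$'s new permission came from some participant $k$ that had at least {\Readable} permission to $l$ in $\mathit{st}$.
\end{itemize}
In the transfer case, the induction hypothesis gives that $e$ synchronizes with $k$. The synchronization event orders all earlier events of $k$ before all later events of every thread in $\vec{t}$, including $j$. By transitivity of $<_{\mathrm{hb}}$, $e$ synchronizes with $j$.

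\textbf{Main obstacle.} The delicate point is the bookkeeping for $\Free$ followed by re-allocation of the same address $l$. After $\Free_i\ l$, a later $\Alloc$ by another thread could give that thread permission to $l$ without synchronizing with the earlier $e$. I would first try to rule this out using CompCert's memory model, where freed blocks are never reused, so the fresh-allocation case never concerns an old $l$. If that is not available, I would weaken the statement to ``since the last {\Free} of $l$.'' A second subtlety is confirming that the non-leader reduction, privatization and \textsc{Step-For}/\textsc{Step-Single} steps do not move permissions between threads; this follows from \textsc{Step-Thread} only updating $\pi_i$.
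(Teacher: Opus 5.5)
Your proposal is correct and is essentially the paper's induction. The base case uses coherence (or freshness, for $\Alloc$) to make $i$ the only thread with at least \Readable{} permission to $l$, and the inductive case traces any newly gained \Readable{} permission back through a $\Par$ or $\BarEv$ event to a participant already synchronized with $e$. The paper leaves the fresh-allocation and free-then-reallocate cases implicit in the lemma, covering them only in the footnote to the race-freedom theorem (Clight never reuses freed locations), which is exactly your resolution; you also only need coherence to exclude other threads with at least \Readable{} permission, not every permission above $\mathsf{None}$.
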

\begin{proof}
By induction on the execution. In the base case, the execution consists of a single step $\mathit{st}_0 \xrightarrow{\vec{e}} \mathit{st}$ where $e \IN \vec{e}$. If $e$ is a {\Write} or {\Free} then thread $i$ must have had at least {\Writable} permissions to $l$ in $\mathit{st}_0$, so by coherence no other thread had {\Readable} permissions to $l$; if $e$ is {\Alloc} then its location must have not been allocated in $\mathit{st}_0$, so $i$ is the only thread with permissions to $l$ in $\mathit{st}$. In either case, $i$ is the only thread in $\mathit{st}$ with at {\Readable} permission to $l$, and $e$ synchronizes with $i$ since they are in the same thread.

In the inductive case, we have an execution ending in a state $\mathit{st}$ with threads synchronized with $e$ as described, and we consider a next step $\mathit{st} \xrightarrow{e'} \mathit{st}'$ and a thread $j$ with at least {\Readable} permission to $l$ in $\mathit{st}'$. If $j$ had {\Readable} permission to $l$ in $\mathit{st}$, then we are finished. If it did not, then $e'$ must be a synchronization operation that changed $j$'s permissions. Whether this operation was a {\Par} or {\BarEv}, it synchronized a set of threads $\vec{t}$ whose old permissions in $\mathit{st}$ summed to a result that included $j$'s new {\Readable} permission to $l$. Thus, there must be at least one thread $k \IN \vec{t}$ that had at least {\Readable} permission in $\mathit{st}$, and since $e$ synchronized with $k$ and $e'$ synchronized $k$ with $j$, $e$ now synchronizes with $j$ in the extended trace.
\end{proof}


These dual lemmas show that read or write/alloc/free events are always synchronized in some way with threads that hold {\Readable} permissions after them. We can use this to prove data race freedom:

\begin{theorem}[Data race freedom] If an execution's trace contains two conflicting events $e_1$ and $e_2$, then $e_1 <_\mathrm{hb} e_2$.
\end{theorem}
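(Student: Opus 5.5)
Assume $e_1$ occurs before $e_2$ in the trace, at positions $i<j$, and write $l$ for the common location. Since the events conflict, at least one of them is a $\Write$, $\Alloc$ or $\Free$. We consider the prefix of the execution that ends in the state $\mathit{st}$ just before the step emitting $e_2$. The aim is to show that $e_1$ synchronizes with the thread $t_2$ of $e_2$ in the trace up to $\mathit{st}$. By the definition of synchronizing with a thread, extending that trace with the step containing $e_2$ then gives $e_1 <_\mathrm{hb} e_2$.

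The key fact is that every memory event is licensed by the permissions its thread holds at the start of the step. If $e_2$ is a $\Read$, then $t_2$ holds at least $\Readable$ on $l$ in $\mathit{st}$. If $e_2$ is a $\Write$ or $\Free$, then $t_2$ holds at least $\Writable$, and hence at least $\Readable$. The step that emits $e_1$ and $e_2$ may also emit earlier events; if $e_1$ lies in the same step, then both events come from the same thread and are ordered by program order. So we may assume $e_1$ lies strictly in the trace ending at $\mathit{st}$. The argument then splits on the kind of $e_1$.

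\textbf{Case 1: $e_1$ is a $\Write$, $\Alloc$ or $\Free$.} By \Cref{lemma:write-sync}, $e_1$ synchronizes with every thread holding at least $\Readable$ on $l$ in $\mathit{st}$. This covers $t_2$ whenever $e_2$ is a $\Read$, $\Write$ or $\Free$.

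\textbf{Case 2: $e_1$ is a $\Read$.} Then $e_2$ must be a $\Write$ or $\Free$, so $l$ is still allocated in $\mathit{st}$ and \Cref{lemma:read-sync} applies. It yields a thread $j$ that holds at least $\Readable$ on $l$ in $\mathit{st}$ and with which $e_1$ synchronizes. Since $t_2$ holds $\Writable$, permission coherence (the CPM theorem) rules out any other thread holding $\Readable$. Hence $j=t_2$.

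\textbf{Remaining case: $e_2$ is an $\Alloc$.} This case needs separate care. If $l$ is fresh, then any earlier event on $l$ must precede a $\Free$ of $l$. We handle this by chaining: the earlier event is ordered before that $\Free$ (by the cases above, which need a $\Writable$ holder), and that $\Free$ happens before $e_2$ by Case 1. Alternatively, if CompCert never reuses block identifiers, then an $\Alloc$ of $l$ is always the first event on $l$ and the case is vacuous. I expect this, together with justifying that memory-event steps and synchronization steps are disjoint (so permissions during a step are exactly those of the starting state), to be the main subtlety. The rest is bookkeeping to make sure the two lemmas apply to the right prefix of the execution.
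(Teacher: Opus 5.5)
Your proposal follows the paper's proof essentially step for step. You restrict to the prefix ending just before the step that emits $e_2$. When $e_1$ is a \Write, \Alloc{} or \Free, you apply \Cref{lemma:write-sync}. When $e_1$ is a \Read, you apply \Cref{lemma:read-sync} and then use coherence to force $j = t_2$. Your explicit same-step case and $e_2 = \Alloc$ case are slightly more careful than the paper's text.

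One correction: drop the chaining route for $e_2 = \Alloc$, because it does not go through. Case 1 only reaches threads holding at least \Readable{} on $l$ in $\mathit{st}$, and the allocating thread $t_2$ holds nothing on a freed $l$. So you cannot get $\Free <_{\mathrm{hb}} e_2$ that way. In fact, if locations could be reused, an unsynchronized \Free/\Alloc{} pair would be a genuine race.

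What you actually need is your second alternative: CompCert never reuses freed blocks, so an \Alloc{} of $l$ is always the first event on $l$. This is exactly how the paper's footnote disposes of this case.
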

\begin{proof}
By induction, it suffices to consider the case where $e_2$ is produced by the last step of the execution, $\mathit{st} \rightarrow \mathit{st}'$, and the execution up through $\mathit{st}$ is race-free. Let $t_2$ be the thread of $e_2$. Since $e_1$ and $e_2$ conflict, they are on the same location $l$ and either $e_1$ or $e_2$ is a write/alloc/free. In the case where $e_1$ is a write, $t_2$ has at least {\Readable} permission to $l$ (since $e_2$ must be at least a read), and so by \cref{lemma:write-sync} $e_1$ synchronizes with $t_2$ and thus $e_1 <_\mathrm{hb} e_2$. In the case where $e_1$ is a read, $l$ is still allocated in $\mathit{st}$ (since we perform $e_2$ on it\footnote{Note that Clight never reallocates freed locations, so we do not need to worry about the case where $e_2$ {\Alloc}s $l$ after $e_1$ reads $l$.}), so by \Cref{lemma:read-sync} there must be some thread $j$ in $\mathit{st}$ with at least {\Readable} permission to $l$ such that $e_1$ synchronizes with $j$. But we know that $e_2$ must be either a {\Write} or a {\Free}, and in either case $t_2$ must have at least {\Writable} permission to $l$. By coherence, this means that no other thread has {\Readable} permission to $l$, so the thread $j$ must be exactly $t_2$; once again, $e_1$ synchronizes with $t_2$ and thus $e_1 <_\mathrm{hb} e_2$.
\end{proof}


Thus, any execution in our semantics is necessarily race-free.